\def\bSig\mathbf{\Sigma}
\newcommand{\pr}{\mathbb{P}}
\newcommand{\E}{\mathbb{E}}
\newcommand{\R}{\mathbb{R}}
\newcommand{\one}{\mathbbm{1}}
\newcommand{\setA}{\mathcal{A}}
\newcommand{\setS}{\mathcal{S}}
\newcommand{\setN}{\mathcal{N}}
\newcommand{\setC}{\mathcal{C}}
\newcommand{\setX}{\mathcal{X}}
\newcommand{\thetahat}{\hat{\vartheta}}
\newcommand{\setP}{\mathcal{P}}
\newcommand{\setH}{\mathcal{H}}
\newcommand{\X}{\mathfrak{X}}
\newcommand{\Xmat}{\bm{\mathfrak{X}}}
\newcommand{\Cmat}{\bm{\mathfrak{C}}}
\newcommand{\tra}{^\top}
\newcommand{\given}{\, |\,}
\DeclareMathOperator{\var}{var}
\DeclareMathOperator{\cov}{cov}
\DeclareMathOperator{\diag}{diag}
\DeclareMathOperator{\argmin}{argmin}
\DeclareMathOperator{\CR}{\bold{CR}}
\DeclareMathOperator{\Bin}{Bin}
\DeclareMathOperator{\Dir}{Dir}
\DeclareMathOperator{\Beta}{Beta}
\DeclareMathOperator{\mBeta}{mBeta}
\DeclareMathOperator{\BCP}{BCP}
\newtheorem{definition}{Definition}
\newtheorem{example}{Example}
\newtheorem{lemma}{Lemma}
\newtheorem{theorem}{Theorem}
\newtheorem{prop}[theorem]{Proposition}
\title{Simultaneous Inference for Multiple Proportions: A Multivariate Beta-Binomial Model}
\author{
  Max Westphal\thanks{Correspondence to: Max Westphal, \href{mailto:mwestphal@uni-bremen.de}{mwestphal@uni-bremen.de},
  \url{https://orcid.org/0000-0002-8488-758X}}\\
  Institute for Statistics\\
  University of Bremen\\
  Bremen, Germany
}
\date{March 20, 2020}
\begin{document}
\maketitle

\begin{abstract}
Statistical inference in high-dimensional settings is challenging when standard unregularized methods are employed. In this work, we focus on the case of multiple correlated proportions for which we develop a Bayesian inference framework.
For this purpose, we construct an $m$-dimensional Beta distribution from a $2^m$-dimensional Dirichlet distribution, building on work by \cite{olkin2015}. 
This readily leads to a multivariate Beta-binomial model for which simple update rules from the common Dirichlet-multinomial model can be adopted. From the frequentist perspective, this approach amounts to adding pseudo-observations to the data and allows a joint shrinkage estimation of mean vector and covariance matrix.
For higher dimensions ($m>10$), the extensive model based on $2^m$ parameters starts to become numerically infeasible. To counter this problem, we utilize a reduced parametrisation which has only $1+m(m+1)/2$ parameters describing first and second order moments. 
A copula model can then be used to approximate the (posterior) multivariate Beta distribution. 
A natural inference goal is the construction of multivariate credible regions.
The properties of different credible regions are assessed in 
a simulation study in the context of investigating the accuracy of multiple binary classifiers. It is shown that the extensive and copula approach lead to a (Bayes) coverage probability very close to the target level. In this regard, they outperform credible regions based on a normal approximation of the posterior distribution, in particular for small sample sizes. Additionally, they always lead to credible regions which lie entirely in the parameter space which is not the case when the normal approximation is used.
\end{abstract}

\keywords{Bayesian inference \and binary data \and copula model \and credible region \and Dirichlet distribution \and shrinkage estimation}

\section{Introduction}

This work is motivated by the goal to conduct Bayesian inference for multiple  proportions $\vartheta_j\in (0,1)$, $j=1,\ldots,m$, which are possibly correlated. In particular, the goal is to derive a multidimensional credible region for $\bm \vartheta = (\vartheta_1,\ldots, \vartheta_m) \in (0,1)^m$ taking into account the dependency structure. The author's main application of interest are model evaluation and comparison studies where the accuracy of $m$ binary classifiers is assessed on the same dataset. Previous work in the machine learning context has shown that evaluating multiple promising models on the final test data (instead of a single prespecified model) can improve the final model performance and statistical power in the evaluation study. Hereby, it is beneficial to take into account model similarity when adjusting for multiplicity as the according adjustment needs to be less strict when different models give similar predictions \citep{EOMPM1, IMS}. 

\cite{EOMPM1, IMS} focused was on frequentist methods, in particular a multivariate normal approximation 
in conjunction with the so called maxT-approach (projection method) \citep{hothorn2008}. From a Bayesian viewpoint, this can (at least numerically) be seen as a multivariate normal-normal model under the assumption of a flat (improper) prior distribution. This approach showed good performance in terms of family-wise error rate control in extensive simulation studies. However, it does have some drawbacks. Firstly, the resulting confidence region is not guaranteed to lie entirely in the parameter space $(0,1)^m$. Secondly, the needed estimate of the covariance matrix may be singular, which is e.g. the case if any observed proportion $\thetahat_j$ is zero or one. This scenario becomes increasingly likely when $m \rightarrow \infty$. Finally, for (close to) least favourable parameter configurations, this approach leads to an increased type 1 error rate in the frequentist sense.
While many ad hoc remedies exist for these problems (e.g. parameter transformation, shrinkage estimation), the main goal of this work to derive a more self-contained model. Moreover, it may be desirable to include prior knowledge to the inference task which calls for a Bayesian treatment of the problem.

For a single proportion $\vartheta \in (0,1)$, a common Bayesian approach is the Beta-binomial model where each observation $X_i$ is a Bernoulli variable such that 
\begin{align}
Y=\sum_i X_{i} \sim \Bin(n, \vartheta).
\end{align}
Assuming a Beta prior distribution with shape parameters $\alpha > 0, \beta > 0$ for $\vartheta$, i.e.
\begin{align}\label{uni_beta_prior}
\vartheta \sim \Beta(\alpha, \beta)
\end{align}
leads to the posterior distribution 
\begin{align}\label{uni_beta_posterior}
\vartheta\ | \ y \sim \Beta(\alpha+y, \beta+n-y)
\end{align}
given $y=\sum_i x_i \in \{0,1,\ldots,n\}$ successes have been observed \citep[pp.\ 172-173]{ASI}. 

The goal of this work is twofold. Firstly, the Beta distribution shall be generalized to higher dimensions allowing general correlation structures. 
A multivariate generalization of the Beta distribution has been studied in several works \citep{jones2002, olkin2003, nadarajah2005, arnold2011}, mostly however limited to the bivariate case. \citet{olkin2015} discuss shortcomings of some of these approaches such as a restricted range of possible correlations or a complicated extension to higher dimensions. Another general access to this problem the separation of marginal distributions and dependency structure via copula models \citep{CMD, CBD, nyaga2017}. The second aim is to then derive a multivariate Beta-binomial model which allows to conduct Bayesian inference regarding $\bm \vartheta$ or transformations thereof. 

To this end, the bivariate Beta distribution by \cite{olkin2015} based on an underlying Dirichlet distribution is extended to higher dimensions. This was already adumbrated in section 3 of the original article. While this construction works in theory for any dimension $m$, it suffers from the fact that it depends on $2^m$ parameters. This may serve problems, in particular when posterior samples consisting of a large number of observations of (initially) $2^m$ variables need to be drawn. In practice, it is thus only feasible for dimensions not much larger than $m=10$ (depending on computational resources). However, it will be shown that a reduced parametrisation with $1+m(m+1)/2$ parameters allows to handle much higher dimensions with reasonable computational effort by employing a copula model.

The construction shown in this article has methodological similarities to existing work in the context of multivariate Bernoulli or Binomial distributions with general correlation structures \citep{madsen1993, kadane2016, fontana2018}. In particular, the question which correlation structures are admissible for an $m$-dimensional Beta distribution can directly be transferred to the same question regarding an $m$-dimensional Bernoulli distribution \citep{hailperin1965, chaganty2006}. Similar considerations have been made regarding the question how to generate correlated binary data \citep{leisch1998, xue2010, preisser2014, shults2017}. The distinctive feature of this work is thus the different (Bayesian) setting and the focus on statistical inference. Moreover, to the best of the author's knowledge, the necessary and sufficient moment conditions that are provided in section \ref{mBeta} have not been mentioned in this form in the literature despite the fact that many authors have recognized the connection to linear programming \citep{madsen1993, fontana2018, shults2017}.

The remainder of this work is structured as follows: In the section \ref{theory}, the construction and several properties of the multivariate Beta distribution as well as restrictions on the admissible correlation structures are described. Moreover, a multivariate analogue to the common Beta-binomial model is introduced. In section \ref{inference}, different methods for the derivation of multivariate credible regions are described. Details regarding the numerical implementation are given in section \ref{software}. Section \ref{sim} covers numerical simulations in the context of model evaluation and comparison studies to access the properties of different credible regions. Finally, section \ref{discussion} contains a summary of the present work and a discussion on the connection between multivariate credible regions and Bayesian hypothesis testing.

\section{Statistical model and theoretical results}\label{theory}

\subsection{Multivariate Beta distribution}\label{mBeta}

Our goal is a joint model for the success probabilities $\vartheta_j$ of $m\in \mathbb{N}$ Binomial variables $Y_j = \sum_i X_{ij} \sim \Bin(n, \vartheta_j)$, $j=1,\ldots, m$, with arbitrary correlation structure between the variables $X_j, X_{j'}$. Conditional on the observed data $Y_j = y_j$, we assume that marginally 
\begin{align}
\vartheta_j\ |\ y_j \sim \Beta(\alpha_j + y_j, \beta_j +n - y_j)
\end{align}
as introduced in equation \eqref{uni_beta_posterior}.
The variables $Y_j$ are the sum $Y_j = \sum_{i=1}^n X_{ij}$ of Bernoulli variables ${X_{ij} \stackrel{iid}{\sim} \Bin(1, \vartheta_j)}$, $j=1,\ldots,m$, which are also observed.

The subsequent construction of an $m$-dimensional Beta distribution 
is based on a $2^m$-dimensional Dirichlet distribution such as proposed by \cite{olkin2015} for the bivariate case. The Dirichlet distribution is frequently employed in the so called Dirichlet-multinomial model for the success probabilities of multinomial data. A multinomial random variable is the generalization of a Binomial random variable, i.e. each observation is one of $w$ distinct events $\{1,\ldots,w\}$ where each event $k$ has a probability of $p_k$ to occur, such that ${||\bm p||_1 = \sum_k p_k = 1}$. A Dirichlet random variable $\bm p = (p_1,\ldots,p_w)\tra \sim \Dir(\bm \gamma)$ has support $\bm \setP= \{\bm p \in (0,1)^w: ||\bm p||_1=1\}$ and is fully characterized by the concentration parameter (vector) $\bm \gamma = (\gamma_1,\ldots,\gamma_w)\tra  \in \mathbb{R}^w_+$. 
A comprehensive overview of the Dirichlet distribution is given by \cite{DIR}.
In the following, an $m$-dimensional random variable with multivariate Beta distribution will be constructed from a $2^m$-dimensional Dirichlet random variable.
We will see that this can be achieved by a convenient parametrisation and a simple linear transformation. Although the case $m=1$ can easily be recovered, $m \geq 2$ is assumed in the following to avoid laborious case distinctions.

A single binary observation is assumed to be a realization of an $m$-dimensional random variable ${\bm X = (X_1, \ldots, X_m)\tra \in \bm \setX= \{0,1\}^m}$.  The complete experimental data, $n$ i.i.d. observations of $\bm X$, is collected in the rows of the $n \times m$ binary matrix $\Xmat$. We define a categorical random variable $\bm C =h^{-1}(\bm X) \in \bm \setC= {\{\bm c \in \{0,1\}^w:\ ||\bm c||_1=1\}}$, $w=2^m$, which is linked to $\bm X$ via the mapping $h$ which is defined in the following. 
\begin{definition}[Transformation matrix]\label{Hmat}
	Define the linear mapping ${h: \mathbb{R}^w \rightarrow \mathbb{R}^m}$, $\bm z \mapsto \bm H \bm z$, whereby the {$j$-th} column of the transformation matrix \protect{$\bm H = \bm H(m) \in \{0,1\}^{m \times w}$}
	corresponds to the binary representation (of length $m$) of the integer $j-1$, $j=1,\ldots,w$. 
\end{definition}
This definition uniquely defines $\bm H(m)$ for any dimension $m$. For instance, for $m=3$,
\begin{align}\label{H3}
\bm H= \bm H(3)=\left( \begin{array}{rrrrrrrr}
0 & 0 & 0 & 0 & 1 & 1 & 1 & 1 \\
0 & 0 & 1 & 1 & 0 & 0 & 1 & 1 \\
0 & 1 & 0 & 1 & 0 & 1 & 0 & 1 
\end{array}\right).
\end{align}
It is easy to see that $h(\bm \setC) = \bm \setX$ and that $|\bm \setX|=|\bm \setC|=2^m$. In effect, $h$ defines a bijection between $\bm \setC$ and $\bm \setX$ which is illustrated below for $m=3$:
\begin{align}
\bm X = (0,0,0)\tra \quad  &\Leftrightarrow \quad \bm C = 
(1, 0, 0, 0, 0, 0, 0, 0)\tra \\
\bm X = (0,0,1)\tra \quad  &\Leftrightarrow \quad \bm C = 
(0, 1, 0, 0, 0, 0, 0, 0)\tra \\
\bm X = (1,1,0)\tra \quad  &\Leftrightarrow \quad \bm C = 
(0, 0, 1, 0, 0, 0, 0, 0)\tra \\
&\ \ \vdots \\
\bm X = (1,1,0)\tra \quad  &\Leftrightarrow \quad \bm C = 
(0, 0, 0, 0, 0, 0, 1, 0)\tra\\
\bm X = (1,1,1)\tra \quad  &\Leftrightarrow \quad \bm C = 
(0, 0, 0, 0, 0, 0, 0, 1)\tra. 
\end{align}

Hence, the function $h$ defines a one-to-one correspondence between observing (a) $m$ correlated Bernoulli variables $X_j$ and (b) a single categorical variable $\bm C$ with $2^m$ possible events. The same link may be used to relate (a) $m$ correlated Binomial variables $Y_j=\sum_i X_{ij}$ and (b) a single multinomial variable $\bm D=\sum_i \bm C_{i}$. 
A realization $\bm d=\bm d(\Xmat) = \bm d(\bm \Cmat)$ of $\bm D$ is referred to as the cell count version of the experimental data. It can easily be computed as the sum of all rows of the matrix $\Cmat$. 
Clearly, the probabilities 
$p_k = \pr(\bm C = \bm c_k)$
for the $w$ distinct events $\bm c_k$, $k = 1,\ldots,w$, can be modelled via the Dirichlet distribution as $||\bm p||_1=1$. This allows us to define the random variable $\bm \vartheta = \bm H \bm p$ and investigate it's properties.

\begin{definition}[Multivariate Beta (mBeta) distribution]\label{mBeta_def}
	Let $m\geq 2$, $w=2^m$ and ${\bm \gamma \in \R_+^w}$. Let $\bm p = (p_1,\ldots,p_w)\tra \sim \Dir(\bm \gamma)$ follow the Dirichlet distribution with concentration parameter $\bm \gamma$. 
	Define the linear transform $\bm \vartheta = \bm H \bm p$ of $\bm p$ whereby the transformation matrix $\bm H = \bm H(m) \in \{0,1\}^{m \times w}$ is defined in definition \ref{Hmat}. In this case $\bm \vartheta$ is said to follow a multivariate Beta distribution with concentration parameter $\bm \gamma$ or $\bm \vartheta \sim \mBeta(\bm \gamma)$ for short.
\end{definition}

\begin{prop}[Properties of the mBeta distribution]\label{properties}
	Let $\bm \vartheta \sim \mBeta(\bm \gamma)$ as defined in definition \ref{mBeta_def}. Then the following assertions hold:
	\begin{enumerate}
		\item $\vartheta_j = \pr (X_j = 1)$ for $j=1,\ldots,m$.
		\item $\bm \vartheta$ is a $m$-dimensional random variable with support $\bm \Theta = (0,1)^m$.
		\item For any $j \in \{1,\ldots,m\}$, $\vartheta_j$ marginally has a $\Beta(\alpha_j, \beta_j)$ distribution with parameters ${\bm \alpha = \bm H \bm \gamma}$ and $\bm \beta = \nu - \bm \alpha$ whereby $\nu = ||\bm \gamma ||_1$. 
		\item The mean vector of $\bm \vartheta$ is given by $\E(\bm \vartheta) = \bm \alpha/\nu$.
		\item Define $\bm \Gamma = \diag(\bm \gamma)$ and $\bm A = \bm H \bm \Gamma \bm H\tra \in \mathbb{R}^{m \times m}_+$. Then $\bm \alpha = \diag(\bm A)$ and 
		the covariance of $\bm \vartheta$ is given by 
		\begin{align}
		\cov(\bm \vartheta)  
		= \bm \Sigma = \left(\nu \bm A -  \bm \alpha \bm \alpha\tra \right)/(\nu^2(\nu+1)).
		\end{align}
		\item Probabilities of products  $\vartheta_J
		= \pr( \bigcap_{j \in J} \{X_j=1\}) = \pr(\prod_{j \in J} X_j = 1) $ with $J \subset \{1,\ldots,m\}$ have a  $\Beta(\alpha_J, \beta_J)$ distribution with
		\begin{align}
		\alpha_J= \bm H_J\bm \gamma \in \R \quad \text{and} \quad \beta_J= \nu - \alpha_{J}.
		\end{align}
		Hereby, $\bm H_J= (\odot_{j \in J} \bm H_{j:}) \in \{0,1\}^{1 \times w}$ is the Hadamard product of associated rows $\bm H_{j:}$ of $\bm H$.
	\end{enumerate}
\end{prop}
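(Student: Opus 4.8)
The common thread through all six assertions is that $\bm \vartheta = \bm H \bm p$ is a linear image of a Dirichlet vector, so I would repeatedly lean on two standard facts: the Dirichlet moment formulas $\E(\bm p) = \bm \gamma/\nu$ and $\cov(\bm p) = (\nu \bm \Gamma - \bm \gamma \bm \gamma\tra)/(\nu^2(\nu+1))$, and the Dirichlet aggregation property, namely that collapsing a subset $S$ of the coordinates into a single sum yields $\sum_{k \in S} p_k \sim \Beta(\sum_{k \in S}\gamma_k,\ \nu - \sum_{k \in S}\gamma_k)$. Before using these I would record the combinatorial reading of $\bm H$: by Definition \ref{Hmat} the $k$-th column of $\bm H$ equals $h(\bm c_k) = \bm x^{(k)}$, so the entry $H_{jk}$ is exactly the $j$-th coordinate $x^{(k)}_j$ of the outcome associated with event $\bm c_k$, and hence row $j$ carries a one precisely at those categories $k$ for which $X_j = 1$.

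For assertion 1 this reading gives $\vartheta_j = (\bm H \bm p)_j = \sum_k H_{jk} p_k = \sum_{k:\, x^{(k)}_j = 1} p_k$, the total probability assigned to all categories with $X_j = 1$, which by the law of total probability over the $2^m$ disjoint events equals $\pr(X_j = 1)$. Assertion 2 then follows because each $\vartheta_j$ is a strict sub-sum of the strictly positive $p_k$ (for $m \ge 2$ row $j$ of $\bm H$ contains both zeros and ones), so $\vartheta_j \in (0,1)$ almost surely and $\bm \vartheta \in (0,1)^m$; surjectivity onto $(0,1)^m$, needed to pin down the support, I would obtain from the independence coupling $p_k = \prod_{j=1}^m \vartheta_j^{\,x^{(k)}_j}(1-\vartheta_j)^{1-x^{(k)}_j}$, which lies in the open simplex and satisfies $\bm H \bm p = \bm \vartheta$ for any prescribed target $\bm \vartheta \in (0,1)^m$.

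Assertion 3 is where the aggregation property does the work: with $S_j = \{k : H_{jk} = 1\}$ the sum $\vartheta_j = \sum_{k \in S_j} p_k$ is $\Beta(\sum_{k \in S_j}\gamma_k,\ \nu - \sum_{k \in S_j}\gamma_k)$, and $\sum_{k \in S_j}\gamma_k = (\bm H \bm \gamma)_j = \alpha_j$, giving $\beta_j = \nu - \alpha_j$. Assertion 4 is immediate from linearity, $\E(\bm \vartheta) = \bm H\,\E(\bm p) = \bm H \bm \gamma/\nu = \bm \alpha/\nu$, consistent with the Beta mean $\alpha_j/\nu$. For assertion 5 I would first note $\bm \alpha = \diag(\bm A)$, since $(\bm H \bm \Gamma \bm H\tra)_{jj} = \sum_k H_{jk}^2 \gamma_k = \sum_k H_{jk}\gamma_k = \alpha_j$ using $H_{jk} \in \{0,1\}$, and then compute
\begin{align}
\cov(\bm \vartheta) = \bm H\,\cov(\bm p)\,\bm H\tra = \frac{1}{\nu^2(\nu+1)}\left(\nu\,\bm H \bm \Gamma \bm H\tra - \bm H \bm \gamma \bm \gamma\tra \bm H\tra\right) = \frac{\nu \bm A - \bm \alpha \bm \alpha\tra}{\nu^2(\nu+1)},
\end{align}
substituting $\bm A = \bm H \bm \Gamma \bm H\tra$ and $\bm H \bm \gamma = \bm \alpha$.

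Finally, assertion 6 is a direct generalization of 1 and 3. The event $\bigcap_{j \in J}\{X_j = 1\}$ corresponds to the categories $k$ with $x^{(k)}_j = 1$ for all $j \in J$, i.e.\ with $\prod_{j \in J} H_{jk} = (\bm H_J)_k = 1$; since the Hadamard product of binary rows is again a binary row, $\vartheta_J = \bm H_J \bm p$ is a sum of $p_k$ over this index set, and the aggregation property yields $\vartheta_J \sim \Beta(\bm H_J \bm \gamma,\ \nu - \bm H_J \bm \gamma) = \Beta(\alpha_J, \beta_J)$. I expect no serious obstacle; the only points requiring care are the precise statement and invocation of the aggregation property (assertions 3 and 6) and, for assertion 2, distinguishing the almost-sure containment $\bm \vartheta \in (0,1)^m$ from the surjectivity argument that establishes the full support.
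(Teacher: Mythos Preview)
Your proposal is correct and follows essentially the same route as the paper: the aggregation property for assertions 3 and 6, the linear-transformation moment formulas for assertions 4 and 5, and the combinatorial reading of $\bm H$ for assertion 1. The only noteworthy addition is that for assertion 2 you supply a surjectivity argument (via the product-form $\bm p$) to conclude that the support is all of $(0,1)^m$, whereas the paper's proof only records the containment $\bm H \bm p \in (0,1)^m$; your extra step is a genuine (if minor) improvement.
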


Most of the above claims follow immediately from the definition of $\bm \vartheta$. Further details are provided in appendix \ref{proofs}. The symmetric matrix $\bm A$ contains the (scaled) first-order moments $\bm \alpha = \nu \bm \mu = \diag(\bm A)$ and mixed second-order moments $\alpha_{jj'}$ as off-diagonal elements and will prove to be useful later. \citet{olkin2015} state that the density function of $\bm \vartheta$ does not have a closed form expression and show several representations for different subregions of the unit square in the bivariate setting.
The next definition will allow a characterization of the correlation structures that are admissible for a multivariate Beta distribution.

\begin{definition}[Moment conditions]\label{mc_def}
	Let $\nu \in \R_+$ and $\bm A \in \R_+^{m \times m} $ be a symmetric matrix. Define ${\bm 1_w = (1,\ldots,1)\tra \in \R^w}$,
	\quad
	\begin{align}
	\bm H^{(2)} &= \left(\bm H_{j:} \odot \bm H_{j':}\right)_{\substack{\ j=1,\ldots,m-1\\ j'=j+1,\ldots,m}} = \begin{pmatrix} 
	\quad \ \bm H_{1:} \odot \bm H_{2:}\\
	\quad \ \bm H_{1:} \odot \bm H_{3:}\\
	\quad \ \vdots \\
	\bm H_{(m-1):} \odot \bm H_{m:}\\
	\end{pmatrix} \quad\\
	\text{and}\quad  \widetilde{\bm H} &= \begin{pmatrix} 
	\bm H\\
	\bm H^{(2)}\\
	\bm 1_w\tra
	\end{pmatrix} \in \{0,1\}^{r \times w}
	\end{align}
	with $r=1+m(m+1)/2$. Hereby $\bm H_{j:}$ is the $j$-th row of $\bm H$ and $\odot$ the Hadamard (entrywise) product of vectors. In addition, let $\widetilde{\bm \alpha} = (\bm \alpha\tra, \bm \alpha^{(2)\top}, \nu)\tra $ whereby $\bm \alpha^{(2)}=(\alpha_{12},\alpha_{13},\ldots, \alpha_{(m-1)m})\tra$ contains the upper off-diagonal elements of $\bm A$. Then the pair $(\nu, \bm A)$ is said to satisfy the moment conditions if
	\begin{align}\label{mc}\tag{MC}
	\forall \bm b \in \R^{1+m(m+1)/2}:\quad  \widetilde{\bm H}\tra \bm b \geq \bm 0 \ \Rightarrow\  \bm b\tra \widetilde{\bm \alpha}  \geq 0.
	\end{align}
\end{definition}

\quad

Note that the binary matrix $\widetilde{\bm H}$ only depends on the dimension $m$ and not on $\bm A$. By imputing suitable vectors $\bm b$ in \eqref{mc} it is easy to see that the moment bounds
\begin{align}\label{mb}\tag{MB}
\forall J \subset \{1,\ldots,m\}:\quad \nu \geq \sum_{j \in J} \alpha_j - \sum_{j,j' \in J: j \neq j'} \alpha_{jj'}
\end{align}
are a consequence of \eqref{mc}. Hereby, the sum over an empty index set is defined to be zero.
Moreover, \eqref{mc} also implies the Fr\'echet type bounds
\begin{align}\label{fb}\tag{FB}
\max(\bm 0,\ \bm R_{\bm A} + \bm C_{\bm A} - \nu) = \bm A ^- \leq \bm A \leq \bm A^+ =\ \min(\bm R_{\bm A}, \bm C_{\bm A})
\end{align}
whereby both inequalities and min and max operations are meant component-wise. Hereby $\bm R_{\bm A}$ is the $m \times m$ matrix with all rows identical and equal to ${\bm \alpha = \diag(\bm A)}$ and $\bm C_{\bm A} = \bm R_{\bm A}\tra$. The derived conditions \eqref{mb} and \eqref{fb} or variations thereof have appeared several times in the relevant literature \citep{leisch1998, shults2017}.

\begin{prop}[mBeta parametrisation]\label{main} \leavevmode
	\begin{enumerate}
		\item (Existence) Let $\nu\in \R_+$, $\bm \mu \in (0,1)^m$ and $\bm R \in (-1,1)^m$ a valid (symmetric, positive-definite) correlation matrix with $\diag(\bm R)=\bm 1_m$. Then, there exists a vector $\bm \gamma \in \R_+^w$, $w=2^m$, with $||\bm \gamma||_1=\nu$ and a random variable ${\bm \vartheta \sim \mBeta(\bm \gamma)}$ such that 
		\begin{align}
		\E(\bm \vartheta) = \bm \mu \quad \text{and} \quad \cov(\bm \vartheta) = \bm \Sigma = \bm V^{1/2} \bm R \bm V^{1/2}
		\end{align}
		if and only if $\nu$ and the derived moment matrix
		\begin{align}
		\bm A = \bm A(\nu, \bm \mu, \bm R) = \nu((\nu +1) \bm \Sigma + \bm \mu \bm \mu\tra)
		\end{align}
		satisfy \eqref{mc}. Hereby, $\bm V = \diag(\bm\mu \odot (1 - \bm \mu))/(\nu+1)$. 
		\item (Uniqueness) Given parameters $\nu$ and $\bm A$ fulfilling $\eqref{mc}$ as in (1), the parameter $\bm \gamma$ can be uniquely determined if and only if $m=2$. For ${m>2}$, uniqueness can be achieved by imposing additional constraints, e.g. by minimization of $||\bm \gamma - \bm 1_w\nu/w||_2$. 
	\end{enumerate}	
\end{prop}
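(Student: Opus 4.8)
The plan is to recast both assertions as statements about the linear system that the concentration vector $\bm\gamma$ must satisfy, and then to read off existence from a theorem of the alternative and uniqueness from a dimension count.

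First I would translate the moment-matching requirement into linear constraints on $\bm\gamma$. By Proposition \ref{properties}, a variable $\bm\vartheta\sim\mBeta(\bm\gamma)$ has $\E(\bm\vartheta)=\bm H\bm\gamma/\nu$ and a covariance that is a fixed function of $\nu$ and of $\bm A=\bm H\diag(\bm\gamma)\bm H\tra$, whose diagonal is $\bm\alpha=\bm H\bm\gamma$ and whose off-diagonal entries are the mixed moments $\alpha_{jj'}=(\bm H_{j:}\odot\bm H_{j':})\bm\gamma$. Hence prescribing $\E(\bm\vartheta)=\bm\mu$ and $\cov(\bm\vartheta)=\bm\Sigma$ is, via the bijection $(\bm\mu,\bm\Sigma)\leftrightarrow(\nu,\bm A)$ given by $\bm A=\nu((\nu+1)\bm\Sigma+\bm\mu\bm\mu\tra)$ and $\bm\alpha=\diag(\bm A)=\nu\bm\mu$, exactly equivalent to requiring that $\bm\gamma\geq\bm 0$ solve the single linear system $\widetilde{\bm H}\bm\gamma=\widetilde{\bm\alpha}$, where $\widetilde{\bm H}$ and $\widetilde{\bm\alpha}$ collect the first-moment rows $\bm H$, the second-moment rows $\bm H^{(2)}$ and the normalisation row $\bm 1_w\tra$ as in Definition \ref{mc_def}. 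Both parts of the proposition therefore reduce to questions about this system.

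For existence (part 1), the forward implication is immediate: if some $\bm\gamma\geq\bm 0$ solves $\widetilde{\bm H}\bm\gamma=\widetilde{\bm\alpha}$, then for every $\bm b$ with $\widetilde{\bm H}\tra\bm b\geq\bm 0$ one has $\bm b\tra\widetilde{\bm\alpha}=(\widetilde{\bm H}\tra\bm b)\tra\bm\gamma\geq 0$, which is precisely \eqref{mc}. The converse is the content of Farkas' lemma in its theorem-of-the-alternative form: the set $\{\bm\gamma\geq\bm 0:\ \widetilde{\bm H}\bm\gamma=\widetilde{\bm\alpha}\}$ is empty if and only if there is a separating vector $\bm b$ with $\widetilde{\bm H}\tra\bm b\geq\bm 0$ and $\bm b\tra\widetilde{\bm\alpha}<0$, i.e. if and only if \eqref{mc} fails. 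Given a feasible $\bm\gamma$, setting $\bm p\sim\Dir(\bm\gamma)$ and $\bm\vartheta=\bm H\bm p$ yields the required distribution by Definition \ref{mBeta_def}. The delicate point is that the Dirichlet distribution strictly needs $\bm\gamma>\bm 0$ whereas Farkas only returns $\bm\gamma\geq\bm 0$; I would address this by noting that under \eqref{mc} the feasible set is a nonempty polytope whose relative interior consists of strictly positive vectors once the bounds \eqref{mb} and \eqref{fb} hold strictly, and by treating the boundary cases as degenerate (limiting) mBeta laws.

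For uniqueness (part 2), I would argue through the geometry of the solution set. Since the feasible set is the intersection of the affine space $\{\widetilde{\bm H}\bm\gamma=\widetilde{\bm\alpha}\}$ with the nonnegative orthant, $\bm\gamma$ is pinned down by $(\nu,\bm A)$ exactly when $\ker\widetilde{\bm H}=\{\bm 0\}$, i.e. when the $r\times w$ matrix $\widetilde{\bm H}$ has full column rank $w$; this forces $r\geq w$. A short computation gives $r=w=4$ for $m=2$ and $r<w$ for every $m\geq 3$ (base case $m=3$, then induction, as $2^m$ outgrows the quadratic $r=1+m(m+1)/2$). For $m=2$ I would check directly that the explicit $4\times4$ matrix $\widetilde{\bm H}(2)$ is invertible, giving the unique solution $\gamma_{11}=\alpha_{12}$, $\gamma_{10}=\alpha_1-\alpha_{12}$, $\gamma_{01}=\alpha_2-\alpha_{12}$, $\gamma_{00}=\nu-\alpha_1-\alpha_2+\alpha_{12}$, whose nonnegativity recovers \eqref{mb} and \eqref{fb}. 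For $m\geq 3$ we have $\ker\widetilde{\bm H}\neq\{\bm 0\}$, so from any strictly feasible $\bm\gamma$ one may move slightly along a kernel direction and stay feasible, producing a continuum of admissible concentration vectors; a canonical representative is then recovered by minimising the strictly convex objective $\|\bm\gamma-\bm 1_w\nu/w\|_2$ over the convex, nonempty feasible polytope, which is unique by strict convexity.

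I expect the main obstacle to be twofold: invoking Farkas' lemma with the correct orientation so that its dual cone condition matches \eqref{mc} verbatim, and reconciling the open support of the Dirichlet and hence of the mBeta distribution with the non-strict inequalities in \eqref{mc}. This strict-versus-weak feasibility issue is exactly what both the boundary caveat in the existence argument and the interior-point requirement in the kernel-perturbation step of the uniqueness argument hinge upon.
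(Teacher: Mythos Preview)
Your proposal is correct and follows essentially the same route as the paper: reduce the moment-matching to the linear system $\widetilde{\bm H}\bm\gamma=\widetilde{\bm\alpha}$ with $\bm\gamma\geq\bm 0$, invoke Farkas' lemma for existence, and compare $r=1+m(m+1)/2$ against $w=2^m$ for uniqueness, with the strictly convex quadratic program yielding a canonical representative when $m>2$. You are in fact more careful than the paper on two points it glosses over---verifying that $\widetilde{\bm H}(2)$ is actually invertible (not just square) and flagging the $\bm\gamma>\bm 0$ versus $\bm\gamma\geq\bm 0$ discrepancy---so no correction is needed.
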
 
The first result can be proven by applying Farkas' Lemma, a standard result from linear programming, to the linear program 
\begin{align}\label{lp}\tag{LP}
\widetilde{\bm H} \bm \gamma = \widetilde{\bm \alpha} 
\end{align} which needs to be solved for $\bm \gamma$. Details are given in appendix \ref{proofs}.

The moment conditions give some intuition on the admissible correlation structures, in particular by means of the weaker but more interpretable necessary conditions \eqref{mb} and \eqref{fb}. However, a direct verification of \eqref{mc} is usually not feasible in practice,
at least not more efficiently than attempting to solve \eqref{lp}. A more practical approach to translate a correlation into a moment description is described in section \ref{software}.

The bounds on the derived moment matrix $\bm A$ induced by \eqref{mc} imply bounds on the correlation matrix $\bm R$ because the elements of $\bm A$ are monotone in the according elements of $\bm R$. The construction and the according conditions translate to a multivariate Bernoulli distribution with minor modifications. 
In this context, several works have illustrated the 
bounds on the correlation coefficients for low dimensions \citep{prentice1988, preisser2014, shults2017}. For instance, for $m=2$, \eqref{fb} implies
\begin{align}
\max \left( - (\psi_1\psi_2)^{-1} 
, -\psi_1 \psi_2 \right)\leq \rho_{12} \leq \min \left(  \frac{\psi_1}{\psi_2}, \frac{\psi_2}{\psi_1} \right)
\end{align}
with $\psi_j = \sqrt{\mu_j/(1-\mu_j)} $, $j=1,2$.
These necessary correlation bounds also apply to the case ${m>2}$ for all $\rho_{jj'}$ but are only sufficient for $\bm R$ being admissible for $m=2$.
It should be noted, that the overall concentration parameter $\nu=||\bm \gamma||_1$ only drives the variances of the $\vartheta_j$. That is to say, two mBeta distributions induced by the parameters $\bm \gamma_1$ and $\bm \gamma_2$ with $\bm \gamma_1 / ||\bm \gamma_1||_1 = \bm \gamma_2 / ||\bm \gamma_2||_1 $ have the same correlation structure. Below, several simple results concerning \eqref{mc} are described, some of which will be utilized in the next section. 

\begin{prop}[\eqref{mc} in practice]\label{practice}
	\quad
	\begin{enumerate}
		\item For all $\bm \gamma \in \R_+^w$, the pair $\nu=||\bm \gamma||_1$,  $\bm A = \bm H \bm \Gamma \bm H \tra$ satisfies \eqref{mc}. 
		\item Let $\bm d = \bm d(\bm \X)$ be the cell count version of the experimental data and $\bm \Delta = \diag(\bm d)$. Then the pair $n=||\bm d||_1$, $\bm U = \bm H \bm \Delta \bm H$ satisfies \eqref{mc}.
		\item If $(\nu, \bm A)$ and $(n, \bm U)$ both satisfy \eqref{mc}, the pair $(\nu^*, \bm A^*)=(\nu+ n, \bm A + \bm U)$ does as well.
		\item Not all $\nu \in \R_+$ and $\bm A  \times \R_+^{m \times m}$ satisfy \eqref{mc}.
	\end{enumerate}
\end{prop}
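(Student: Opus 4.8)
The plan is to exploit the linear-programming structure already set up in \eqref{lp}: the summary vector $\widetilde{\bm\alpha}$ depends linearly on the concentration parameters through $\widetilde{\bm H}$, and \eqref{mc} is precisely the Farkas' Lemma certificate for \eqref{lp} admitting a nonnegative solution (this is the content of Proposition \ref{main}). For parts (1)--(3) I would not even need to invoke Farkas' Lemma, only the sign structure.

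For part (1), the first step is to check the block identity $\widetilde{\bm H}\bm\gamma = \widetilde{\bm\alpha}$ entry by entry. The top block gives $\bm H\bm\gamma = \bm\alpha$, which is the definition of $\bm\alpha$ in Proposition \ref{properties}; the bottom row gives $\bm 1_w\tra\bm\gamma = \|\bm\gamma\|_1 = \nu$, using $\bm\gamma \geq \bm 0$; and the middle block ($\bm H^{(2)}$) gives $(\bm H_{j:}\odot\bm H_{j':})\bm\gamma = \sum_k H_{jk}H_{j'k}\gamma_k = (\bm H\bm\Gamma\bm H\tra)_{jj'} = \alpha_{jj'}$, matching the off-diagonal entries of $\bm A$. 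With this identity in hand, for any $\bm b$ satisfying $\widetilde{\bm H}\tra\bm b \geq \bm 0$ I would write $\bm b\tra\widetilde{\bm\alpha} = \bm b\tra\widetilde{\bm H}\bm\gamma = (\widetilde{\bm H}\tra\bm b)\tra\bm\gamma \geq 0$, since both factors are nonnegative. This establishes \eqref{mc} directly. Part (2) is then literally the same argument with the nonnegative cell counts $\bm d$ playing the role of $\bm\gamma$, $n = \|\bm d\|_1$ and $\bm U = \bm H\bm\Delta\bm H\tra$.

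Part (3) follows from linearity: the map $(\nu, \bm A)\mapsto\widetilde{\bm\alpha}$ is linear, so the summary vector of $(\nu^*,\bm A^*) = (\nu+n,\bm A+\bm U)$ is the sum of the summary vectors of $(\nu,\bm A)$ and $(n,\bm U)$. For any $\bm b$ with $\widetilde{\bm H}\tra\bm b \geq \bm 0$ I would split the inner product $\bm b\tra\widetilde{\bm\alpha}^*$ into the two corresponding pieces, each of which is $\geq 0$ by the hypothesis that $(\nu,\bm A)$ and $(n,\bm U)$ satisfy \eqref{mc}; hence so is the sum.

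For part (4) I would exhibit an explicit counterexample. Since \eqref{mc} implies the moment bounds \eqref{mb}, and \eqref{mb} with the singleton $J=\{j\}$ reduces to $\nu \geq \alpha_j$, any pair with some diagonal entry $A_{jj} > \nu$ cannot satisfy \eqref{mc}; a concrete choice is $\nu = 1$ with $\bm A = 2\bm I_m \in \R_+^{m\times m}$. To keep the argument self-contained I would display the Farkas multiplier $\bm b$ with a $+1$ in the $\nu$-slot, a $-1$ in the $\alpha_j$-slot and zeros elsewhere: then $(\widetilde{\bm H}\tra\bm b)_k = 1 - H_{jk} \geq 0$ for every $k$ because $H_{jk}\in\{0,1\}$, while $\bm b\tra\widetilde{\bm\alpha} = \nu - \alpha_j = -1 < 0$, contradicting \eqref{mc}. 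I expect the only point requiring care to be the block-wise verification of $\widetilde{\bm H}\bm\gamma = \widetilde{\bm\alpha}$ in part (1)---especially matching the $\bm H^{(2)}$ block to the mixed moments $\alpha_{jj'}$---after which parts (1)--(3) collapse to one-line sign and linearity arguments and part (4) to a single explicit multiplier; no genuinely difficult step arises.
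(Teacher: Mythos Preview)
Your proposal is correct and follows essentially the same line as the paper: parts (1)--(3) via the identity $\widetilde{\bm\alpha}=\widetilde{\bm H}\bm\gamma$ together with nonnegativity and linearity, and part (4) via an explicit Farkas multiplier $\bm b$. The only cosmetic differences are that the paper invokes the Farkas equivalence from Proposition~\ref{main} rather than re-verifying the block identity in (1) directly, and that its counterexample in (4) (for $m=2$, $\nu=4$, $\bm A=\begin{psmallmatrix}2&3\\3&3\end{psmallmatrix}$, $\bm b=(1,0,-1,0)\tra$) violates the Fr\'echet-type bound $\alpha_{12}\leq\alpha_1$ rather than your simpler bound $\alpha_j\leq\nu$.
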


\subsection{Multivariate Beta-binomial model}\label{model}

The next result formalizes that the update rule $\bm \gamma^* = \bm \gamma + \bm d$ from the well-studied Dirichlet-multinomial model can be adopted for the multivariate generalization of the Beta-binomial model. Hereby, observed cell counts $d_k$ are added to corresponding prior parameters $\gamma_k$.

\begin{prop}[Multivariate Beta-binomial model]\label{mBeta_model}
	Let $\bm \vartheta \sim \mBeta(\bm \gamma)$ be the prior distribution for $\bm \vartheta$ and let $\Xmat \in \{0,1\}^{n \times m}$ be the observed data matrix with cell count representation $\bm d = \bm d(\Xmat)$. Then:
	\begin{enumerate}
		\item The posterior distribution of $\bm \vartheta$ is given by
		\begin{align}
		\bm \vartheta\given \bm d \sim \mBeta(\bm \gamma^*)
		\end{align} 
		whereby  $\bm \gamma^* = \bm \gamma + \bm d$.
		\item Let $\bm \gamma$ and $\bm \gamma^*$ be the parameter of prior and posterior distribution of $\bm \vartheta$, respectively. Let ${\bm A = \bm H \bm \Gamma \bm H\tra}$ and $\bm A^* = \bm H \bm \Gamma^* \bm H\tra$. Then
		\begin{align}
		\bm A^* = \bm A + \bm U \quad \text{and} \quad \nu^* = \nu + n
		\end{align}
		whereby the update matrix $\bm U = \bm H \bm \Delta \bm H\tra $ depends on the data $\bm \X$ due to $\bm \Delta = 
		\diag(\bm d(\bm \X))$.
	\end{enumerate}
	
\end{prop}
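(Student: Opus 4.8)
The plan is to reduce the whole statement to the classical conjugacy of the Dirichlet-multinomial model, exploiting the fact that by construction (Definition \ref{mBeta_def}) the $\mBeta$ distribution is nothing but the deterministic image $\bm \vartheta = \bm H \bm p$ of a Dirichlet vector $\bm p$. The first observation to record is that the data enter the model only through $\bm p$: conditional on $\bm p$, the categorical observations $\bm C_i = h^{-1}(\bm X_i)$ are i.i.d.\ with cell probabilities $\bm p$, so their sum $\bm d = \sum_i \bm C_i$ satisfies $\bm d \given \bm p \sim \text{Mult}(n, \bm p)$. In particular the sampling density factorizes as $\pr(\bm d \given \bm p) \propto \prod_k p_k^{d_k}$ and does not involve $\bm \vartheta$ except through $\bm p$.

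For part (1) I would combine this multinomial likelihood with the Dirichlet prior $\bm p \sim \Dir(\bm \gamma)$ implicit in $\bm \vartheta \sim \mBeta(\bm \gamma)$. Standard Dirichlet-multinomial conjugacy then gives $\bm p \given \bm d \sim \Dir(\bm \gamma + \bm d) = \Dir(\bm \gamma^*)$. Since $\bm \vartheta = \bm H \bm p$ is a fixed linear map that does not depend on the data, the posterior law of $\bm \vartheta$ is simply the pushforward of this Dirichlet posterior under $\bm H$; by Definition \ref{mBeta_def} that pushforward is exactly $\mBeta(\bm \gamma^*)$, which is the claim. Here I would note that $\bm d$ is a sufficient statistic, so conditioning on $\bm d$, on the matrix $\Xmat$, or on the categorical counts all yield the same posterior for $\bm p$ and hence for $\bm \vartheta$.

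For part (2) the work is purely algebraic and uses only $\bm \gamma^* = \bm \gamma + \bm d$ from part (1). Taking diagonal matrices gives $\bm \Gamma^* = \diag(\bm \gamma + \bm d) = \bm \Gamma + \bm \Delta$, whence $\bm A^* = \bm H \bm \Gamma^* \bm H\tra = \bm H \bm \Gamma \bm H\tra + \bm H \bm \Delta \bm H\tra = \bm A + \bm U$ by linearity. Likewise, because every entry of $\bm \gamma$ and $\bm d$ is nonnegative, $\nu^* = ||\bm \gamma^*||_1 = ||\bm \gamma||_1 + ||\bm d||_1 = \nu + n$, using that the cell counts sum to the sample size, $||\bm d||_1 = n$.

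I do not anticipate a genuine obstacle: the only point requiring care is the clean separation between the parameter $\bm \vartheta$ and the underlying Dirichlet vector $\bm p$, i.e.\ checking that the likelihood depends on the parameters solely through $\bm p$, so that conjugacy may be applied at the level of $\bm p$ and then transported to $\bm \vartheta$ by the deterministic map $\bm H$. Once that is established, part (1) is immediate and part (2) is a one-line computation.
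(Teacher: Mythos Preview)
Your proposal is correct and follows essentially the same route as the paper: reduce part (1) to the standard Dirichlet-multinomial conjugacy via the deterministic relation $\bm\vartheta=\bm H\bm p$, and obtain part (2) by linearity of $\bm H\,\diag(\cdot)\,\bm H\tra$ applied to $\bm\gamma^*=\bm\gamma+\bm d$. Your write-up is in fact more explicit than the paper's (you spell out the multinomial likelihood, sufficiency of $\bm d$, and the pushforward argument), but the underlying argument is identical.
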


The second result is useful as it allows to work with a reduced parametrisation from which the mean vector and covariance matrix of the distribution can still be derived, see proposition \ref{properties}. It depends on $\nu \in \mathbb{R}_+$ and the symmetric matrix $\bm A \in \mathbb{R}_+^{m \times m}$ and thus requires $1+(m+1)m/2$ parameters. Hereby, neither the prior parameter $\bm \gamma$ nor the posterior parameter $\bm \gamma^*$ are needed to derive the posterior matrix $\bm A^*$ from the prior matrix $\bm A$ and the data $\Xmat$. The term 'reduced parametrisation' will be used when $\nu$ and $\bm A$ are known but $\bm \gamma$ is unknown. 

\begin{figure*}
	\begin{subfigure}{0.49\textwidth}
		\includegraphics[width=1\linewidth]{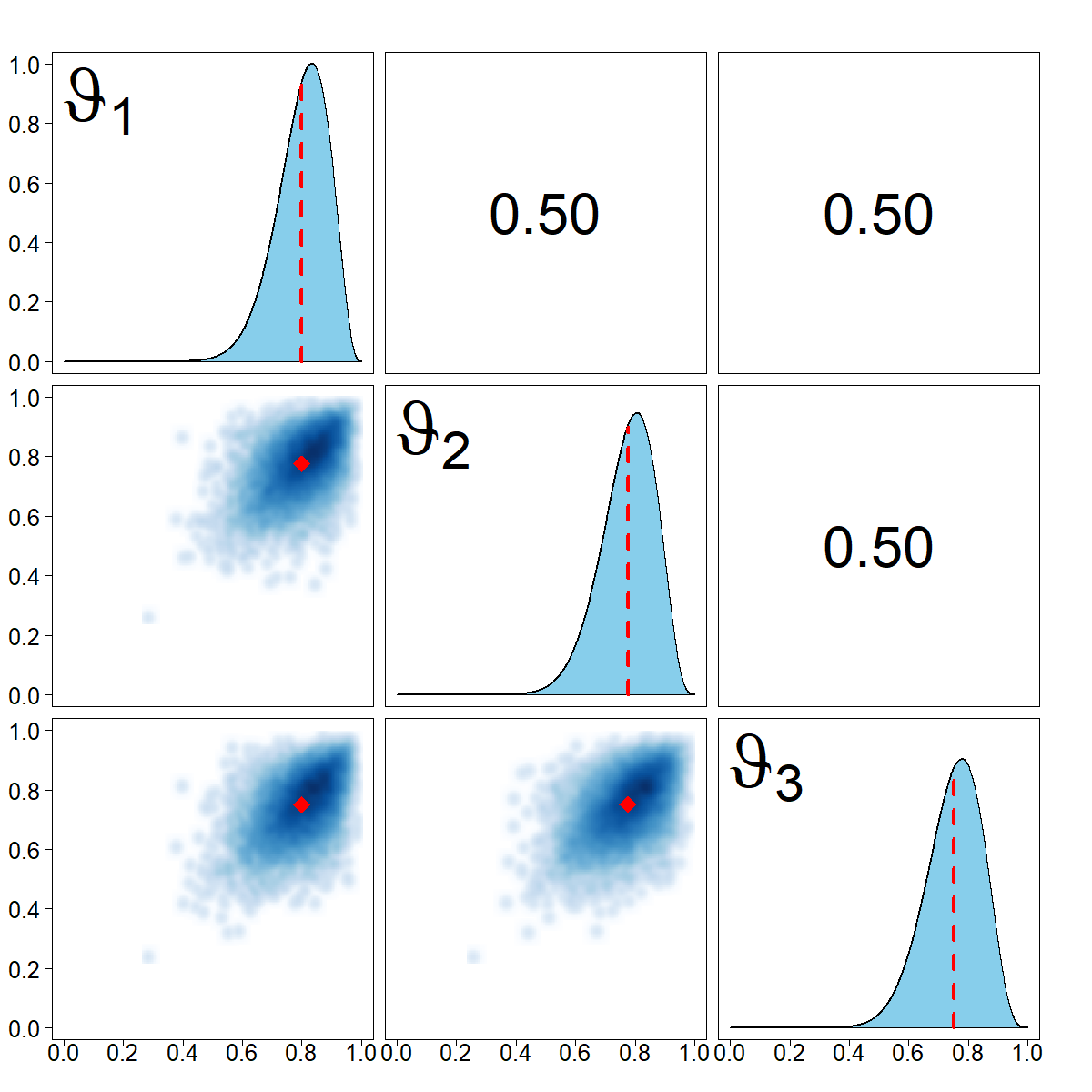}
		\subcaption{Prior distribution ($\nu =20$)}
	\end{subfigure}
	\hfill	
	\begin{subfigure}{0.49\textwidth}
		\includegraphics[width=1\linewidth]{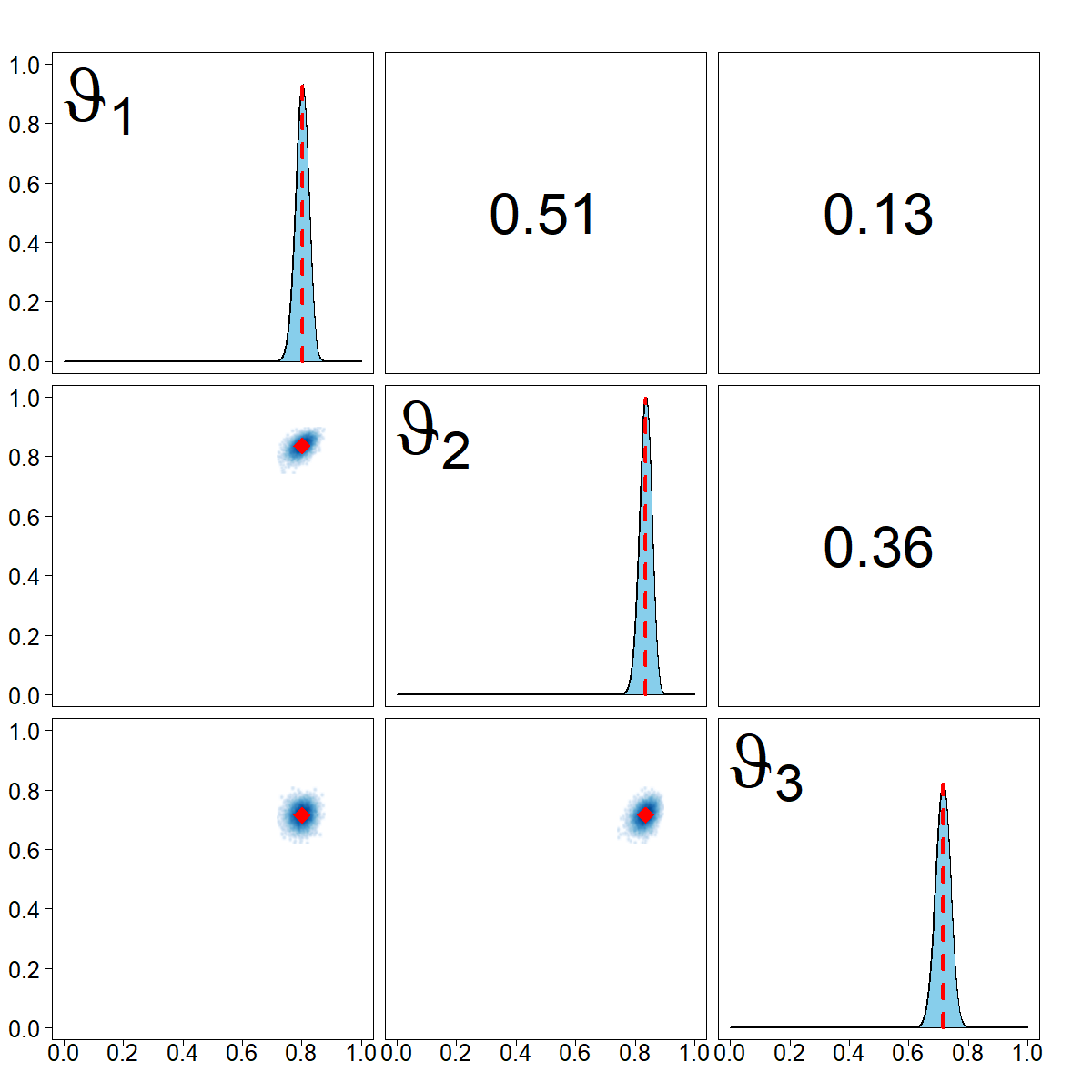}
		\subcaption{Posterior distribution ($\nu^*=337$)}
	\end{subfigure}
	\caption{Visualization of prior and posterior mBeta distribution corresponding to the example in appendix \ref{example}. Plots show the marginal densities (diagonal), bivariate densities (lower panel) and correlation coefficients (upper panel).}
	\label{figure1}
\end{figure*}

When the reduced parametrisation is employed, as a consequence of proposition \ref{practice}, the only worry is to correctly specify a prior parameter $\bm A$, either directly or implicitly (via $\bm R$). As previously stated, checking \eqref{mc} for the prior distribution may not always be feasible, especially in high dimensions. For certain priors with simple structure, this is however easily possible. In particular, mBeta distributions with $\nu \in \R_+$, $\bm \mu= \bm 1_m /2$ and $\bm R = \bm I_m$, i.e. independent $\Beta(\nu/2, \nu/2)$ distributions are always admissible. One can simply check that one possible parameter vector to obtain these properties is $\bm \gamma = \bm 1_w \nu / w$. The case $\nu = 2$ corresponds to independent uniform distributions over $(0,1)^m$ which will be employed as a vague prior in the simulation study in section \ref{sim}. Another simple and practically relevant way to ensure the validity of $\bm A$ is to construct it based on previous experimental data $\bm d_p$ via $\bm A=\bm H \diag(\bm d_p) \bm H\tra$, see proposition \ref{practice}. Figure \ref{figure1} illustrates the update rule by visualizing a prior and posterior distribution in the three dimensional case. The underlying numerical example is provided in appendix \ref{example}.

\section{Bayesian inference}\label{inference}

\subsection{Construction of credible regions}

Assume that the prior distribution $\pi \equiv \mBeta(\bm \gamma)$ has been updated to the posterior distribution $\pi^* \equiv \mBeta(\bm \gamma^*)$ with $\bm \gamma^* = \bm \gamma + \bm d$ (proposition \ref{mBeta_model}). Deriving a simultaneous credible region for all proportions $\bm \vartheta \given \bm d \sim \pi^* $ of interest is a typical data analysis goal. A $100(1-\alpha)$\% credible region is a set {$\CR_{1-\alpha} \subset (0,1)^m$} with the property $\pr_{\pi^*}(\bm \vartheta \in \CR_{1-\alpha}
) = 1-\alpha$ \citep{SDT}.
For simplicity, only equi-tailed, two-sided credible regions are considered in this work. In the normal posterior case, this corresponds to highest density regions (HDR) while for the mBeta distribution this is in general not true. 

There are several ways to construct credible regions for $\bm \vartheta$. One possibility is to approximate the posterior distribution by a multivariate normal by matching the first two moments and use established methods for this case. 
The normal approximation of this posterior distribution is then given by 
\begin{align}
\bm \vartheta \given \bm d \stackrel{\cdot}{\sim} \setN_m(\bm \mu^*, \bm \Sigma^*)
\end{align}
whereby $\bm \mu^*$ and $\bm \Sigma^*$ are derived according to proposition \ref{properties} from $\bm A^*$. From this, a simultaneous credible region 
\begin{align}
\CR_{1-\alpha} =  \prod_{j = 1}^m\left( \mu^*_j - c_\alpha ( v^*_j)^{1/2},\  \mu^*_j + c_\alpha  (v^*_j)^{1/2}   \right),
\end{align}
can be derived. Hereby, the posterior variance vector $\bm v^* \in \R^m$ contains the diagonal elements of $\bm \Sigma^*$. The 'critical' constant $c_\alpha$ can be computed numerically as a suitable equi-tailed quantile of the standard multivariate normal with correlation matrix $\bm R^* = \diag(\bm v^*)^{-1/2} \bm \Sigma \diag(\bm v^*)^{-1/2} $ \citep{hothorn2008}.
Note that such approximate credible regions are not guaranteed to lie in $(0,1)^m$. Moreover, this approach does not benefit from the fact that the type of the marginal posterior distributions is known and non-Gaussian.

This can be alleviated by employing a copula approach \citep{nadarajah2018}. A copula model allows to disentangle marginal distributions $F_{X_1},\ldots,F_{X_m}$ and dependency structure of a multivariate random variable $\bm X =(X_1,\ldots,X_m)\tra$. More specifically, Sklar's theorem states that for every random vector $\bm X$ with joint cumulative distribution function (CDF) $F_{\bm X}$ there exists a copula function $C:[0,1]^m \rightarrow [0,1]$ such that
\begin{align}
F_{\bm X}(\bm x) = C(F_{X_1}(x_1),\ldots,F_{X_m}(x_m)), \quad \bm x \in \bar{\R}^m.
\end{align}
Furthermore, the copula $C$ is unique if all $m$ marginal distributions $F_{X_j}$ are continuous \citep{sklar1959, nadarajah2018}. For instance, a Gaussian copula may be utilized which is parametrized via a correlation matrix $\bm R_m$ and given by
\begin{align}
C_{\bm R_m}(\bm u) = \Phi_m(\Phi^{-1}(u_1),\ldots,\Phi^{-1}(u_m);\, \bm 0_m, \bm R_m).
\end{align}
Hereby $\Phi$ is the univariate standard normal CDF and $\Phi_m$ is the $m$-dimensional normal CDF with mean $\bm 0_m$ and covariance matrix $\bm R_m$. When modelling the posterior distribution $\bm \vartheta\given \bm d$, an obvious choice for $\bm R_m$ is $\bm R^*$, the posterior correlation matrix which can be obtained by standardizing the posterior covariance $\bm \Sigma^*$ (see proposition \ref{properties}). Following similar arguments as given by \citet{dickhaus2012}, this can be used to construct a simultaneous credible region. For this, the same constant $c_\alpha$ as for the normal approximation is used and translated to adjusted local tail probabilities $\tilde{\alpha} = 1-\Phi(c_\alpha)$. The credible region is then based on the $\tilde{\alpha}/2$ and $(1-\tilde{\alpha}/2)$ quantiles of the $m$ marginal Beta distributions of the the joint mBeta posterior.

Lastly, we may base our inference regarding $\bm \vartheta \given \bm d$ on $\bm p \given \bm d$, the underlying Dirichlet-multinomial model (section \ref{theory}).
To pursue this route, a posterior sample $\bm \setS_{\bm p} \in \bm \setP^{n_r} \subset (0,1)^{n_r  \times w} $ of size $n_r$ can be drawn from the underlying $\Dir(\bm \gamma^*)$ Distribution which is then transformed to a posterior sample $\bm \setS_{\bm \vartheta} = \bm  \setS_{\bm p} \bm H\tra \in \bm \Theta^{n_r} \subset (0,1)^{n_r  \times m}$ for $\bm \vartheta$, compare section \ref{mBeta}. Denote by
\begin{align}
\CR (\tilde{\alpha}) = \prod_{j=1}^m (\vartheta^-_j, \vartheta^+_j)
\end{align}
the credible region such that $\pr( \vartheta_j < \vartheta^-_j) = \pr( \vartheta > \vartheta^+_j) = \tilde{\alpha}/2$ is satisfied for all margins $j$, meaning that $\vartheta^-_j$, $\vartheta^+_j$ are suitable marginal Beta quantiles. Subsequently, $\tilde{\alpha}$ can be tuned such that $(1-\alpha)n_r$ data points of $\bm \setS_{\bm \vartheta}$ are contained in $\CR (\tilde{\alpha})$. This can be achieved by a simple numerical root finding. While normal approximation and copula model only require knowledge of the reduced parametrisation $(\nu, \bm A)$, this extensive posterior sampling approach is only feasible when the complete parameter vector $\bm \gamma^*$ is known. It is thus the only of the three approaches which employs all available information - if it is indeed available. In high dimensions it is however computationally expensive or even infeasible as the original Dirichlet sample is of size $n_r \cdot 2^m$.

Note that all three approaches to construct credible regions are Bayes actions in the sense that they are based on (different approximations of) the posterior expected coverage probability. In section \ref{sim}, the influence of these approximations on the Bayes coverage probability, i.e. the expected coverage under different generative prior distributions, will be assessed in a simulation study.

\subsection{Inference for transformed parameters}

Besides inference for the proportions $\vartheta_j$ themselves, transformations of them might also be of interest. The three approaches described in the last section (normal approximation, copula model, extensive sampling approach) can be modified for this purpose. A commonly investigated case are linear contrasts defined by a contrast matrix $\bm K \in \R^{t \times m}$ where $t$ is the dimension of the target space. A popular example are all-vs-one comparisons (w.l.o.g.) defined by $\bm K = \left( \bm I_{m-1}, -\bm 1_{m-1} \right) \in \{-1,0,1\}^{(m-1) \times m}$. Hereby $\bm I_{m-1}$ is the $(m-1)$-dimensional identity matrix and $\bm 1_{(m-1)} = (1,\ldots,1)\tra$. In the model evaluation context, this would relate to comparing the accuracy $\vartheta_j$ of all models $j=1,\ldots,m-1$ against $\vartheta_m$, the accuracy of the $m$-th model.

For the normal approximation and the copula method, the fact that $\bm \vartheta \given \bm d \stackrel{\cdot}{\sim} \bm \setN_m(\bm \mu^*, \bm \Sigma^*) $ implies
\begin{align}
\bm K \bm \vartheta \given \bm d \stackrel{\cdot}{\sim} \bm \setN_t(\bm K \bm \mu^*, \bm K \bm \Sigma^* \bm K\tra)
\end{align}
can be utilized.
For the copula approach, it is important to note, that the difference of two Beta random variables no longer follows a Beta distribution \citep{gupta2004}. One solution to obtain correct marginal quantiles is to rely on posterior sampling for this case as well. Non-linear transformations will not be investigated in this work, could however be tackled by employing the multivariate delta method. 
For the extensive sampling approach, any transformation can be applied to the posterior sample. The transformed sample can then be processed by the same means as before.

\section{Numerical implementation}\label{software}

Functions for prior definition, update rules and calculation of credible regions have been implemented in an \texttt{R} package\footnote{A development version of the \texttt{SIMPle} package is available at \url{https://github.com/maxwestphal/SIMPle} (accessed March 20, 2020).}. It's main goal is to conduct simultaneous inference for multiple proportions by means of the proposed multivariate Beta-binomial model. It allows the definition of a prior distribution based on the mean vector and correlation matrix.
Instead of solving the obvious linear system \eqref{lp}, the least squares problem with equality and inequality constraints  
\begin{align}\label{lsei}\tag{LS}
\min_{\bm \gamma \in \R^w}&||\bm H^{(2)} \bm \gamma - \bm \alpha^{(2)} ||_2^2 \\
\text{subject to} \quad 
&\begin{pmatrix}
\bm H\\
\bm 1_w\tra 
\end{pmatrix} \bm \gamma = 
\begin{pmatrix}
\bm \alpha\\ \nu
\end{pmatrix} \quad \text{and} \quad \bm I_w \bm \gamma \geq \bm 0_w \nonumber
\end{align}
is solved for a given input $(\nu, \bm A)$ with help of the \texttt{lsei} 
package \citep{SLSP, lsei}.
That is to say, we require the first-order moments to be matched exactly and the mixed second-order moments should be fitted as closely as possible. The moment matrix $\bm A$ can be specified explicitly or implicitly via mean vector $\bm \mu$ and correlation matrix $\bm R$, compare proposition \ref{main}. If a solution $\bm \gamma$ of \eqref{lsei} is found, it defines a valid mBeta distribution. If the solution is exact, i.e. ${||\bm H^{(2)} \bm \gamma - \bm \alpha^{(2)} ||_2^2=0}$, it defines an mBeta distribution with exactly the targeted mean and correlation structure. If $||\bm H^{(2)} \bm \gamma - \bm \alpha^{(2)} ||_2^2>0$, the solution $\bm \gamma$ defines a valid mBeta distribution with targeted mean but only approximated correlation structure. 

For dimensions $m>10$ the reduced parametrisation in conjunction with the copula approach described in section \ref{inference} is employed by default because solving \eqref{lsei} or \eqref{lp} becomes numerically expensive. In this case only the necessary bounds \eqref{fb} are checked. As a result, the prior matrix $\bm A$ is in general not guaranteed to satisfy \eqref{mc}, unless simplifying structural assumptions are made.

\section{Simulation study: comparison of multiple classifiers}\label{sim}

\subsection{Method comparison}\label{goal}

This section covers the results of a simulation study in the context of classifier evaluation. In machine learning, prediction models should be trained and evaluated on independent data sets to avoid an overoptimistic performance assessment \citep{ESL, ELA, APM}. Earlier work has shown that a simultaneous evaluation of multiple promising classifiers is beneficial as the test data can then be employed for the final model selection. \citep{EOMPM1, IMS}. Hereby, an adequate adjustment for the introduced selection-induced bias needs to be employed. 

The goal of this simulation study is to compare the properties of different credible regions which have been outlined in section \ref{inference}:
\begin{enumerate}
	\item \textbf{approximate}: 
	normal approximation of posterior distribution
	\item \textbf{copula}: exact posterior marginals, copula model for dependency structure
	\item \textbf{extensive}: based on a posterior sample of size $n_r=10,000$, drawn from the underlying Dirichlet distribution 
\end{enumerate}

Our primary interest is to assess the Bayes coverage probability of these credible regions, i.e. the expected coverage probability 
\begin{align}
\BCP  = \E_{\pi_g} \one (\bm \vartheta \in \CR_{1-\alpha}),
\end{align}
when parameters arise from different generative prior distributions $\pi_g$. The BCP definition is inspired by the standard Bayes risk definition \citep[p.\,11]{SDT}. Each credible region depends on the employed approach and on the analysis prior $\pi = \pi_a$. We investigate two cases here: (a) $\pi_a = \pi_g$, i.e. the true generative prior is known, and (b) a vague prior is used. The latter case will implemented as $m$ independent uniform variables, corresponding to the parameter $\bm \gamma_a = {\bm 1_w \cdot 2/w}$ which was discussed at the end of section \ref{model}. Because all approaches are constructed as (approximate) Bayes actions, we expect a $\BCP$ close to $1-\alpha$ when (a) $\pi_a = \pi_g$ or (b) the sample size is large.

\subsection{Scenarios}\label{setup}

The employed generative prior distributions $\pi_g \equiv \mBeta(\bm \gamma_g)$ are characterized by dimension $m$, concentration parameter $\nu$, mean vector $\bm \mu$ and correlation matrix $\bm R$. For a given scenario $g=(m, \nu, \bm \mu, \bm R)$ the according parameter vector $\bm \gamma_g$ is obtained by solving \eqref{lsei}. The following cases are investigated: for $m=5$, we consider means of $\mu_j=0.75$ for all models with a concentration parameter of $\nu=20$ or $40$ and a equicorrelation of $\rho =0.5$ or $0.75$. For $m=10$, we define two blocks of five models as above. The prior mean is given as $\bm \mu=(0.75,\ldots,0.75, 0.7,\ldots, 0.7)$. This is supposed to mimic the case that two learning algorithms with different hyperparameters are investigated whereby one of them (averaged over the hyperparameters) yields classifiers with higher accuracy. The correlation between models of different algorithms is defined to be $\rho^2$ such that the overall correlation matrix is a 
block matrix consisting of $5\times 5$ blocks. 

For each simulation run, the underlying parameter vector $\bm p$ is drawn from a $\Dir(\bm \gamma_g)$ distribution. The experimental data is then drawn from a $w-$dimensional multinomial distribution with parameters $n, \bm p$ and then transformed to a multivariate Binomial distribution with the same parameters via $\bm X = \bm H \bm C$, compare section \ref{mBeta}. We have investigated the sample sizes $n = 50, 100, 200, 400, 800$.  

As the BCP is a proportion, the standard error of its simulated estimate is bounded from above by $0.5/\sqrt{N_{sim}}$ which is approximately 0.001 in the overall analysis (figures \ref{figure2} and \ref{figure3}; $N_{sim}=200,000$) and 0.002 in the stratified analysis (appendix \ref{details}; $N_{sim}=50,000$). The three investigated methods are applied to the same simulated datasets. The target coverage probability is set to $1-\alpha=0.95$ for all simulations.
The numerical experiments were conducted in \texttt{R} with help of the \texttt{batchtools} package \citep{batchtools}. Software and custom functions that were used to conduct the simulation study are partially publicly available.\footnote{Compare {\url{https://github.com/maxwestphal/SEPM.PUB}} (accessed March 20, 2020).}

\subsection{Results}

\begin{figure*}[t!]
	\centerline{%
		\includegraphics[width=.9\linewidth]{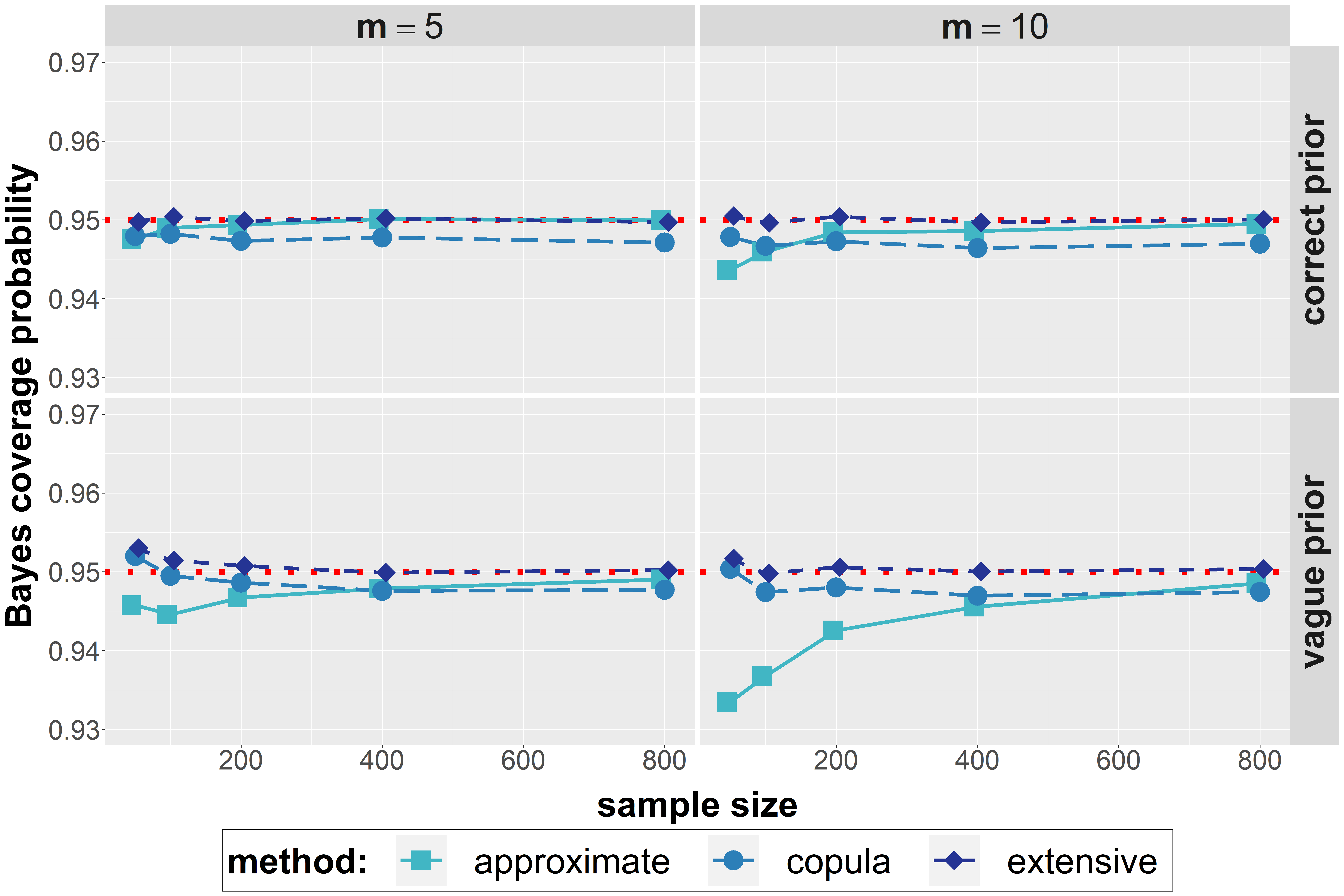}
	}
	\caption{Simulated Bayes coverage probability of different credible regions for raw proportions $\bm \vartheta$. Results are averaged over four different generative distributions for each $m=5, 10$.}
	\label{figure2}
\end{figure*}

Figure \ref{figure2} shows the Bayes coverage probability of the different credible regions for the raw proportions $\bm \vartheta$. The results are only stratified for the number of proportions $m$ and whether the correct or a vague prior is used for the analysis. In effect, each simulated BCP in figure \ref{figure2} is the average over all four scenarios ($\nu \in\{ 20, 40\}$, $\rho \in \{0.5, 0.75\}$) and is thus comprised of $200,000=4\cdot 50,000$ simulation runs.

If the analysis prior corresponds to the true generative prior distribution, all methods have close to target coverage level for $m=5$. As the dimension increases to $m=10$, the BCP deviates more from the target level $95\%$. If the vague analysis prior is employed, the normal approximation clearly performs worse compared to the copula and the extensive approach, in particular for low sample sizes.
More detailed results in appendix \ref{details} suggest that the normal approximation becomes worse not only as $m$ increases but also as the concentration $\nu$ or the correlation $\rho$ decrease. This is plausible as in both cases parameter values $\vartheta_j$ near the boundaries of the unit interval become more likely which negatively inflects the quality of the normal approximation.

\begin{figure*}
	\centerline{%
		\includegraphics[width=.9\linewidth]{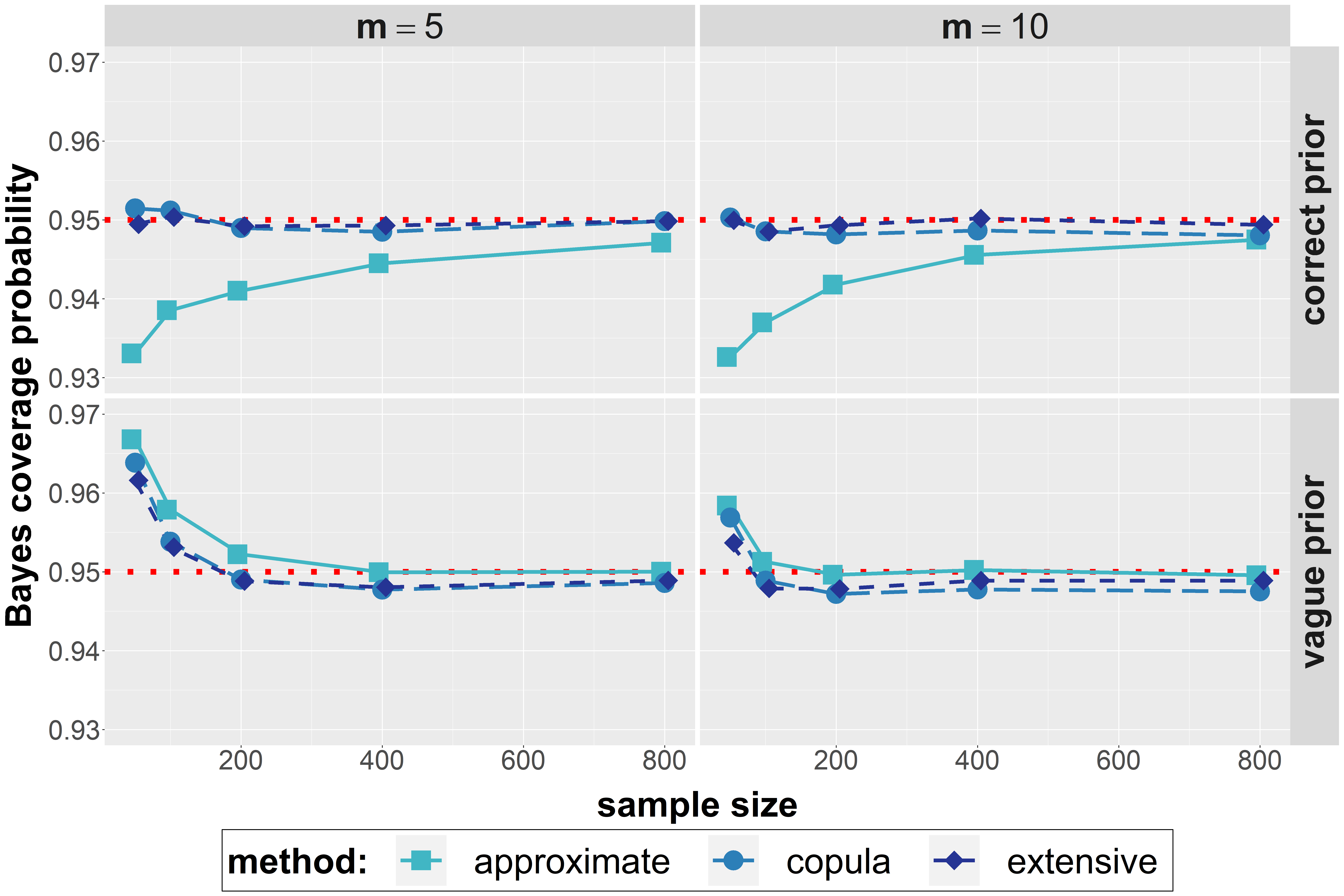}
	}
	\caption{Simulated Bayes coverage probability of different credible regions for differences of proportions $\vartheta_j-\vartheta_m$. Results are averaged over four different generative distributions for each $m=5, 10$.}
	\label{figure3}
\end{figure*}

Figure \ref{figure3} shows similar investigations for credible regions for differences of parameters $\vartheta_j-\vartheta_m$, ${j=1,\ldots,m-1}$. Overall, the picture is similar to the previous analysis. The deviations from the target coverage probability are larger for small $n$ (compared to large $n$) and when the normal approximation is used (compared to the other two approaches). The main difference is that the actual BCP is larger than the target $1-\alpha$ for small $n$ when the vague prior is employed. We attribute this observation to the fact that the (induced) prior for the difference $\vartheta_j-\vartheta_{m}$ is no longer a uniform but rather a triangular distribution.

Besides the coverage probabilities we also investigated the frequency to obtain a credible region ${\CR \not \subset (0,1)^m}$ not entirely in the support of the distribution. While for the copula and extensive approach this probability is zero for all sample sizes by construction, for the approximate method it is nonzero. For the raw proportion analysis (figure 2) this probability is as low as $42\%$ (vague prior) or $78\%$ (correct prior) for $n=50$ and $m=10$ and does stabilize to at least $94\%$ for all scenarios where $n\geq 200$.

\section{Discussion}\label{discussion}

\subsection{Summary}

In this work, a simple construction of a bivariate Beta distribution from a four-dimensional Dirichlet distribution due to \citet{olkin2015} was generalized to higher dimensions. As $2^m$ parameters are needed to describe the $m$-dimensional Beta distribution, it is of limited to no use in high dimensions. To counter this problem, a reduced parametrisation only requiring $1+m(m+1)/2$ parameters was proposed which can be derived from an overall concentration parameter $\nu$ a mean vector $\bm \mu$ and a correlation matrix $\bm R$. Necessary and sufficient conditions have been provided that need to be satisfied such that an mBeta distribution for a given triple $(\nu, \bm \mu, \bm R)$ can exist. 

These moment conditions \eqref{mc} provide some intuition on which correlation structures are admissible. However, they are also of limited use in practice as checking the conditions for general $\bm \mu$ and $\bm R$ is usually not feasible, at least not in a numerically efficient manner. A more concrete descriptions of these conditions may be obtained by calculating the extreme rays of the polyhedral cone $\{\bm b: \widetilde{\bm H}\tra \bm b \geq \bm 0 \}$. That is, one would need to compute the so-called V-representation $\{\bm B \bm \lambda\, |\, \bm \lambda \geq \bm 0 \}$ which implies a finite but potentially large number of conditions $\bm b \tra \widetilde{\bm \alpha} \geq 0$, imposed by the columns $\bm b$ of the matrix $\bm B$. In \texttt{R}, algorithms for that matter are for instance implemented in the package \texttt{rcdd} \citep{rcdd}. 
As the number of generating rays is rapidly growing in the dimension $m$, this method is again only helpful for small dimensions. A similar approach was recently pursued by \citet{fontana2018} in a related context.
Altogether, it appears that the verification of the validity of a multivariate Beta distribution in terms of its mean $\bm \mu$ and correlation structure $\bm R$ is only feasible in higher dimensions when making simplifying structural assumptions, see section \ref{model}. This is only a concern for the prior distribution, as we are guaranteed to end up with a valid posterior when we start with a valid prior (proposition \ref{practice}).

An example of a valid prior was the vague prior employed in the simulation study which corresponds to independent uniform prior distributions. This case can be connected to the so-called Bayes prior which is frequently employed for the Bayesian analysis of a single proportion \citep[p.\,173]{ASI}. Marginally the two approaches do the same, namely adding two pseudo-observations (one success, one failure) to the dataset leading to shift of posterior mass towards $1/2$. The proposed mBeta model additionally includes a prior on and update of the mixed second-order moments. From a frequentist viewpoint, this approach can thus be used for a joint shrinkage estimation of sample mean and covariance matrix as they both depend only on the posterior parameters $(\nu^*, \bm A^*)$ (proposition \ref{properties}). 

The idea to connect marginal Beta-binomial models via a copula approach is not new. In previous work following this direction, marginal distribution and copula parameters were usually jointly estimated \citep{nyaga2017, yamaguchi2019}. In contrast, in this work the copula model is only fitted to the joint multivariate Beta posterior distribution to construct simultaneous credible regions.

The simulation study indicates that the copula and extensive sampling approaches result in credible regions with close to the desired Bayes coverage probability. 
A disadvantage of our proposal which relies on a Gaussian copula model is that the resulting posterior correlation matrix is not equal to the actual correlation matrix according to the mBeta-binomial model. 
This is due to the fact that the nonlinear copula transformation does not preserve the linear correlation between variables.
This issue is more severe for small posterior concentration values $\nu^*=\nu+n$ and vanishes asymptotically as $n \rightarrow \infty$ (Bernstein–von Mises theorem).
We are not aware of a simple correction strategy to address this problem other than to rely on simulations which is numerically expensive.  
In the situations that were assessed in the simulation study ($n\geq 50$), the loss in accuracy when employing the copula approach compared to the extensive sampling approach seems to be negligible. 
In contrast, the normal approximation requires a much larger sample size for satisfactory results and additionally does not guarantee credible regions which lie entirely in the parameter space. As it provides no benefits compared to the copula approach besides simplicity, the latter seems to be a reasonable default choice for the considered problem.

A methodological limitation of this work is that the adequateness of the reduced relative to the full parametrisation cannot be assessed via numerical simulation in higher dimensions. This is due to the fact that data generation from the full underlying Dirichlet distribution becomes also numerically infeasible for $m$ much larger than 10.

\subsection{Extensions}

The present work focused on the construction of multivariate credible regions. Decisions regarding prespecified hypotheses based the posterior distribution of the parameters may of course also be of interest in practice. \citet{madruga2001} and \citet{thulin2014} connect credible regions to hypothesis testing in a decision theoretic framework. They show that for specific loss functions, the standard Bayes test, i.e. deciding for the hypothesis with lowest posterior expected loss, corresponds to comparison of parameter values with credible bounds which may be denoted as $\bm \varphi_{CR}=(\one(\vartheta_0 \notin \CR_{1-\alpha}^{(j)}))_{j=1,\ldots,m} \in \{0,1\}^m$.

An advantage of the approach of \citet{thulin2014} is that the employed loss function does not depend on the observed data. This was a non-standard feature of the proposal by \citet{madruga2001}. It appears that the approach of \citet{thulin2014} can however not easily be transferred to the multivariate setting which would require to specify a loss function $L:  \bm \Theta \times \bm \setA  \rightarrow \R$ such that
\begin{align}
\bm \varphi_{CR} = \argmin_{\bm \varphi \in \bm \setA} \E_{\pi^*}L(\bm \vartheta, \bm \varphi).
\end{align}
Hereby, the action space $\bm \setA$ consists of all possible test decisions, e.g. $\bm \setA=\{0,1\}^m$ for the one-sided hypothesis system
\begin{align}
\setH = \{H_j:\ \vartheta_j \leq \vartheta_0,\ j=1,\ldots,m\},
\end{align} 
A possible generalization of the loss function $L_{(2)}$ from \citet[p.\,136]{thulin2014} is
\begin{align}\label{loss_candidate}
L(\bm \vartheta, \bm \varphi) = || \bm \varphi \odot (\bm 1_m -\bm \chi) ||_{\infty} (1-\alpha) + || (  \bm 1_m- \bm \varphi) \odot\bm \chi ||_{\infty}  \alpha,
\end{align}
whereby $\bm \chi =\left(\one(\vartheta_j \in K_j)\right)_{j=1,\ldots,m} $ indicates for all $j=1,\ldots,m$ whether $\vartheta_j$ is contained in the alternative $K_j = \Theta \setminus H_j$. \citet{thulin2014} showed that $\bm \varphi_{CR}$ is a Bayes test under the loss function \eqref{loss_candidate} in the case $m=1$. In the multivariate setting ($m>1$), this no longer holds true which can be confirmed via numerical examples.
However, $\bm \varphi_{CR}$ can be seen as a constrained Bayes test. That is to say, from all tests with posterior false positive probability $\pr_{\pi^*}(|| \bm \varphi \odot (\bm 1_m -\bm \chi) ||_{\infty} = 1)$ bounded by $\alpha \in (0,1)$, it minimizes the posterior false negative probability $\pr_{\pi^*}(|| (  \bm 1_m- \bm \varphi) \odot\bm \chi ||_{\infty} = 1)$. 

These considerations are somewhat opposing the usual (empirical-) Bayes approach to multiplicity adjustment which is usually based on modifying the prior distribution rather than the loss function \citep{scott2009, guo2010, scott2010}. This established strategy could also be employed for the multinomial Beta-binomial model considered in this work. For instance the prior distribution could be modified such that the tail probability {$u(m) = \pr_{\pi}(||\bm \vartheta||_\infty > \vartheta_0)$} is controlled, e.g. by increasing the concentration parameter $\nu$ (assuming $\bm \mu < \vartheta_0$). Under the vague (independent uniform) prior employed in chapter \ref{sim}, such a control is not given as $u(m) = 1-\vartheta_0^m \rightarrow 1$ for $m\rightarrow \infty$.
The above sketched approaches (adaptation of the loss function, e.g. \eqref{loss_candidate}) to multiple hypothesis testing in the Bayesian framework should be contrasted thoroughly with the established methods (adaptation of the prior, hierarchical models) in the future.

\section*{Acknowledgements}
This project was funded by the Deutsche Forschungsgemeinschaft (DFG, German Research Foundation) - Project number 281474342/GRK2224/1.

\section*{Conflict of interest}
The author declares that there is no conflict of interest.

\bibliography{literature}  

\newpage
\appendix

\setcounter{section}{0}

\section{\textbf{Technical details}}
\label{proofs}
\vspace{.25cm}

At first, some established results concerning the Dirichlet distribution are stated. Let $\bm p = (p_1,\ldots,p_w)\tra \sim \Dir(\bm \gamma)$ with support $\bm \setP= \{\bm p \in (0,1)^w: ||\bm p||_1=1\}$ whereby $\bm \gamma = (\gamma_1,\ldots,\gamma_w)\tra  \in \mathbb{R}^w_+$ and $\nu=|| \bm \gamma||_1$. 
An essential property of the Dirichlet distribution is the so-called aggregation property \citep[Theorem 2.5 (i)]{DIR}. It concerns the vector $\tilde{\bm  p}$ where two components $p_k$ and $p_{k'}$ from the original $\bm p$ are replaced by their sum which has the following distribution:
\begin{align}\label{aggregation}\tag{A.1}
(p_1,\ldots,p_k + p_{k'}, \ldots, p_w) \sim \Dir(\gamma_1,\ldots,\gamma_k + \gamma_{k'}, \ldots, \gamma_w).
\end{align}
Repeated application of this result allows to aggregation of arbitrary subvectors of $\bm p$. In particular, the marginal distribution of the component $p_k$ turns out to be 
\begin{align}
p_k \sim \Beta(\gamma_k,\ \nu-\gamma_k).
\end{align}
Additionally, we will use the fact that $\var(p_k) = \frac{\gamma_k(\nu-\gamma_k)}{\nu^2(\nu+1)}$ and $\cov(p_k, p_{k'})= \frac{-\gamma_k \gamma_{k'}}{\nu^2(\nu+1)}$ for $k \neq k'$ \citep[p.\, 39]{DIR}. When setting $\bm \Gamma = \diag(\bm \gamma)$, this amounts to
\begin{align}\label{dircov}\tag{A.2}
\cov(\bm p) =   \frac{\nu \bm \Gamma - \bm \gamma \bm \gamma\tra}{\nu^2(\nu+1)}.
\end{align}

\begin{proof}[Proof of proposition \ref{properties}]\quad
	\begin{enumerate}
		\item Follows from the definition $\bm X = \bm H \bm C$.
		\item We have $\bm H \in \{0,1\}^{m\times w}$ and $||\bm p||_1=1$ for all $\bm p$. Thus, $\bm \vartheta = \bm H \bm p \in (0,1)^m$.
		\item Is a consequence of the aggregation property \eqref{aggregation}.
		\item Follows from $\vartheta_j \sim \Beta(\alpha_j, \beta_j) \Rightarrow \E(\vartheta_j) = \alpha_j/(\alpha_j+\beta_j)$ 
		and (3).
		\item Follows from \eqref{dircov} and the the fact $\bm \vartheta= \bm H \bm p$ is a linear transformation of $\bm p$.
		\item A generalization of (3) and again a consequence of \eqref{aggregation}. The only thing left to check is that the correct parameters $\gamma_k$ are added up.
	\end{enumerate}
\end{proof}

For the proof of proposition \ref{main}, Farkas' lemma will be employed which is stated below \citep[p.\,263]{CO}. As usual, inequalities between vectors should be interpreted component-wise.

\begin{lemma}[Farkas' lemma]\label{farkas}
	For any matrix $\bm A \in \R^{n \times m}$ and vector $\bm b \in \R^m$, the following two statements are equivalent:
	\begin{enumerate}
		\item The linear system of equations $\bm A \bm x = \bm b$ is feasible, i.e. has a solution $\bm x \in \R^m$, with $\bm x \geq \bm 0$.
		\item For all $\bm y \in R^n$, $\bm A \tra \bm y \geq \bm 0$ implies $\bm y \tra \bm b \geq 0$. 
	\end{enumerate}
\end{lemma}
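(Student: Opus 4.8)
The plan is to prove the two implications separately; the direction (1) $\Rightarrow$ (2) is immediate, while (2) $\Rightarrow$ (1) carries all the content and I would attack it by contraposition combined with a separation argument. For (1) $\Rightarrow$ (2), suppose $\bm x \geq \bm 0$ satisfies $\bm A \bm x = \bm b$ and let $\bm y$ be arbitrary with $\bm A\tra \bm y \geq \bm 0$; then
\begin{align}
\bm y\tra \bm b = \bm y\tra \bm A \bm x = (\bm A\tra \bm y)\tra \bm x \geq 0,
\end{align}
since it is an inner product of two componentwise-nonnegative vectors, and nothing more is required.

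For the converse I would assume (1) fails and construct a $\bm y$ with $\bm A\tra \bm y \geq \bm 0$ but $\bm y\tra \bm b < 0$, contradicting (2). Introduce the cone $K = \{\bm A \bm x : \bm x \geq \bm 0\}$ of nonnegative combinations of the columns $\bm a_1,\ldots,\bm a_m$ of $\bm A$. It is a convex cone with $\bm 0 \in K$, and the failure of (1) is precisely the statement $\bm b \notin K$. Assuming $K$ is closed, let $\bm z^*$ be the projection of $\bm b$ onto $K$ and set $\bm y = \bm z^* - \bm b \neq \bm 0$; the projection inequality $(\bm b - \bm z^*)\tra(\bm z - \bm z^*) \leq 0$ rearranges to $\bm y\tra \bm z \geq \bm y\tra \bm z^* = \bm y\tra \bm b + \|\bm y\|^2 > \bm y\tra \bm b$ for every $\bm z \in K$. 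Evaluating this at $\bm 0 \in K$ gives $\bm y\tra \bm b < 0$. Moreover, since $K$ is scale-invariant, $\bm y\tra \bm z$ cannot be negative anywhere on $K$ --- otherwise scaling a violating point by $t \to \infty$ would drive $\bm y\tra(t\bm z) \to -\infty$ and break the bound --- so $\bm y\tra \bm z \geq 0$ throughout $K$, and applying this to each generator $\bm a_j$ yields exactly $\bm A\tra \bm y \geq \bm 0$.

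The single step that genuinely requires justification, and which I expect to be the main obstacle, is the closedness of the finitely generated cone $K$: it is what makes the projection $\bm z^*$ exist and the separation valid, and it can fail for general convex sets. I would isolate this as a lemma and prove it by a Carath\'eodory-type reduction --- every element of $K$ is a nonnegative combination of a linearly independent subset of the columns, so $K$ is a finite union of simplicial cones, each of which is plainly closed, and a finite union of closed sets is closed. An alternative that avoids topology altogether is to prove the whole equivalence by Fourier--Motzkin elimination applied to the system $\bm A \bm x = \bm b,\ \bm x \geq \bm 0$, reading the certificate $\bm y$ off the nonnegative multipliers generated when each variable is eliminated and inducting on the number of columns; this trades the closedness argument for a finite bookkeeping induction. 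Since the statement is standard and cited, either route suffices, and I would present the separation proof while flagging the closedness of $K$ as the non-obvious ingredient.
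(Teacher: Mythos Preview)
Your proof is correct and is one of the standard arguments for Farkas' lemma: the easy direction by direct computation, the hard direction by contraposition via projection onto the finitely generated cone $K$, with closedness of $K$ correctly isolated as the non-trivial ingredient and handled by a Carath\'eodory decomposition into simplicial cones.

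However, there is nothing to compare against: the paper does not prove this lemma at all. It is stated as a cited standard result (with a textbook reference) and then invoked in the proof of Proposition~\ref{main} to pass between the feasibility of the linear program \eqref{lp} and the moment conditions \eqref{mc}. So your proposal goes well beyond what the paper does --- you supply a full proof where the paper simply quotes the result. If anything, your write-up could serve as a self-contained appendix entry; the only polish I would suggest is to note the dimension typo in the statement (with $\bm A \in \R^{n\times m}$ and $\bm A\bm x = \bm b$, the vector $\bm b$ must lie in $\R^n$, not $\R^m$), since your argument implicitly uses the correct dimensions.
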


\quad

\begin{proof}[Proof of proposition \ref{main}]\quad
	\begin{enumerate}
		\item For a given $\bm A$ or derived moment matrix $\bm A(\nu,\bm \mu, \bm R)$, the linear system \eqref{lp} needs to be solved
		for $\bm \gamma \in \R_+^w, w = 2^m$. Hereby $\widetilde{\bm H}$ and $\widetilde{\bm \alpha}$ are given as in definition \ref{mc_def}. Farkas' lemma implies that \eqref{lp} is feasible if and only if \eqref{mc} holds. 
		\item The linear system \eqref{lp} consists of $r=1+m(m+1)/2$ equations and $w=2^m$ unknowns. Thus, it has a unique solution only for $m=2$ as then $w=r$. On the other hand $m>2$ implies $w > r$ and thus \eqref{lp} is underdetermined and has no unique solution in this case. To enforce uniqueness, the minimization of (e.g.) $||\bm \gamma - \bm 1_w\nu/w||_2$ under the side condition \eqref{lp} can be reformulated as a convex linearly constrained quadratic program and thus has has a unique solution.
		
	\end{enumerate}
\end{proof}

\begin{proof}[Proof of proposition \ref{practice}]\quad
	\begin{enumerate}
		\item It was shown in the proof of proposition \ref{main} that \eqref{mc} is equivalent to the feasibility of \eqref{lp}. Hence, specification of a feasible solution $\bm \gamma$ of \eqref{lp} implies \eqref{mc}.
		\item Follows immediately from (1), as $\bm d \in \R_+ ^w$.
		\item Let $\bm b \in \R^{1+m(m+1)/2}$ with $\widetilde{\bm H}\tra \bm b \geq \bm 0$ be given. 
		Let $\widetilde{\bm u}$ and $\widetilde{\bm \alpha}^*$ be the derived moment vectors from $\bm U$ and $\bm A^* = \bm A + \bm U$, respectively, similar as in definition \ref{mc_def}. Then 
		\begin{align}
		\bm b\tra \widetilde{\bm \alpha}^* = \bm b\tra(\widetilde{\bm \alpha} + \widetilde{\bm u}) =\bm b\tra \widetilde{\bm \alpha} + \bm b\tra \widetilde{\bm u} \geq 0 + 0 = 0.
		\end{align}
		\item For $m=2$, the parameters $\nu = 4$ and 		$\bm A = \begin{pmatrix} 
		2 & 3\\
		3 & 3 
		\end{pmatrix}$ together with the vector ${\bm b = (1,0,-1,0)\tra}$ provide a counterexample as $\bm b\tra \widetilde{\bm \alpha}=-1<0$ but
		\begin{align}
		\widetilde{\bm H}\tra \bm b = \begin{pmatrix} 
		0 & 0 & 0 & 1\\
		0 & 1 & 0 & 1\\
		1 & 0 & 0 & 1\\
		1 & 1 & 1 & 1
		\end{pmatrix}
		\begin{pmatrix} 
		1 \\ 
		0 \\
		-1\\
		0
		\end{pmatrix}
		= 
		\begin{pmatrix} 
		0 \\ 
		0 \\
		1\\
		0
		\end{pmatrix}
		\geq \bm 0.
		\end{align}
	\end{enumerate}
	
\end{proof}

\begin{proof}[Proof of proposition \ref{mBeta_model}]\quad
	\begin{enumerate}
		\item The first assertion follows from the one-to-one connection between mBeta and Dirichlet distribution and their respective samples ($\mathfrak{X}, \mathfrak{C}$) and the established update rule $\bm \gamma^* = \bm \gamma + \bm d$ for the Dirichlet-multinomial model.
		\item The second result follows by noting that
		\begin{align}
		\bm A^* =& \bm H \bm \Gamma^* \bm H\tra = \bm H \diag(\bm \gamma + \bm d) \bm H\tra \\
		=&\bm H \diag(\bm \gamma) \bm H\tra + \bm H \diag(\bm d) \bm H\tra 	= \bm A + \bm U.
		\end{align}
	\end{enumerate}
\end{proof}

\newpage
\section{\textbf{Numerical example}}\label{example}
\vspace{.25cm}

\begin{example}[mBeta update rule]\label{update}
	Three proportions $\bm \vartheta = (\vartheta_1,\vartheta_2,\vartheta_3 )\tra$ shall be assessed on the same dataset. Our prior belief in terms of the mean vector is $\bm \mu = (0.8, 0.775, 0.75)\tra$. The parameters are assumed to be positively correlated, modeled as an equicorrelation of ${\rho=0.5}$. Our certainty in this prior is limited, expressed by an overall concentration parameter $\nu = 20$ which can be interpreted as the prior sample size. This can be modeled as $\bm \vartheta \sim \mBeta(\bm \gamma)$ with
	\begin{align}
	\bm \gamma &= (2.57,\, 0.00,\,  0.16,\,  1.27,\,  0.36,\,  1.57,\,  1.91,\, 12.17)\tra\\
	\quad \Rightarrow \quad \nu &= 20,\quad \bm A = \begin{pmatrix}
	16.00 &14.07 &13.73\\
	14.07 &15.50 &13.43\\
	13.73 &13.43 &15.00
	\end{pmatrix},
	\end{align}
	whereby $\bm A$ represents the reduced parametrisation. The observed experimental data
	\begin{align}
	\bm d &= (24,\, 10,\, 0,\, 29,\, 9,\, 8,\, 58,\, 179)\tra\\
	\quad \Rightarrow \quad n &=317, \quad \bm U = \begin{pmatrix}
	254  &237  &187 \\
	237  &266  &208 \\
	187  &208  &226
	\end{pmatrix},
	\end{align}
	leads to a posterior distribution $\bm \vartheta \given \bm d \sim \mBeta(\bm \gamma^*)$ with
	\begin{align}
	\bm \gamma^* &= \bm \gamma + \bm d \\
	&= (26.57,\,10.00,\,0.16,\,30.27,\,   9.36,\,9.57,\,59.91,\,191.17)\tra
	\end{align}
	and reduced parametrisation
	\begin{align}
	\nu^* &= \nu + n = 337,\\
	\bm A^* &= \bm A +\bm U =  \begin{pmatrix}
	270.00 &251.07 &200.73 \\
	251.07 &281.50 &221.43 \\
	200.73 &221.43 &241.00
	\end{pmatrix}.
	\end{align}	
	This can be translated back to posterior mean and correlation matrix 
	\begin{align}
	\bm \mu^* = (0.80,\, 0.84 ,\,0.72)\tra, \quad \bm R^* = \begin{pmatrix}
	1.00 &0.51 &0.13\\
	0.51 &1.00 &0.36\\
	0.13 &0.36 &1.00
	\end{pmatrix}.
	\end{align}
	A visualization of prior and posterior distribution of this example is provided in figure \ref{figure1} in section \ref{model}.	
\end{example}

\newpage

\renewcommand\thefigure{\thesection.\arabic{figure}}    
\setcounter{figure}{0}  

\section{\textbf{Additional simulation results}}\label{details}

\subsection{\textbf{Analysis of raw proportions}}

\begin{figure*}[h!]
	\centerline{%
		\includegraphics[width=0.85\linewidth]{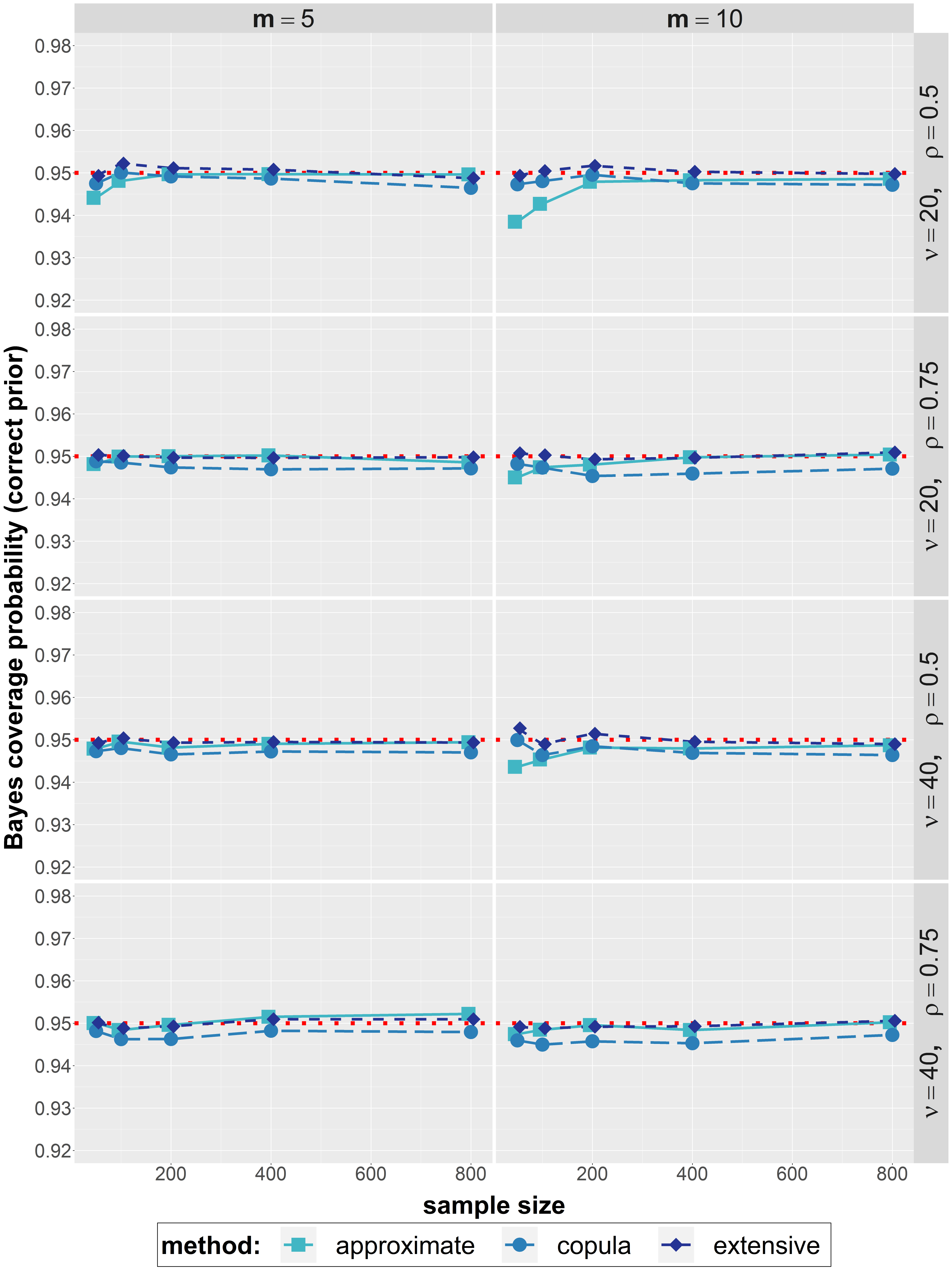}
	}
	\caption{Simulated Bayes coverage probability of different credible regions for raw proportions $\vartheta_j,\ j=1,\ldots,m$. Results are stratified by generative prior distribution, see section \ref{setup}. Only results for the correct analysis prior are shown. Each point is based on $50,000$ simulations. 
	}
	\label{figureA2c}
\end{figure*}

\begin{figure*}[h!]
	\centerline{%
		\includegraphics[width=0.85\linewidth]{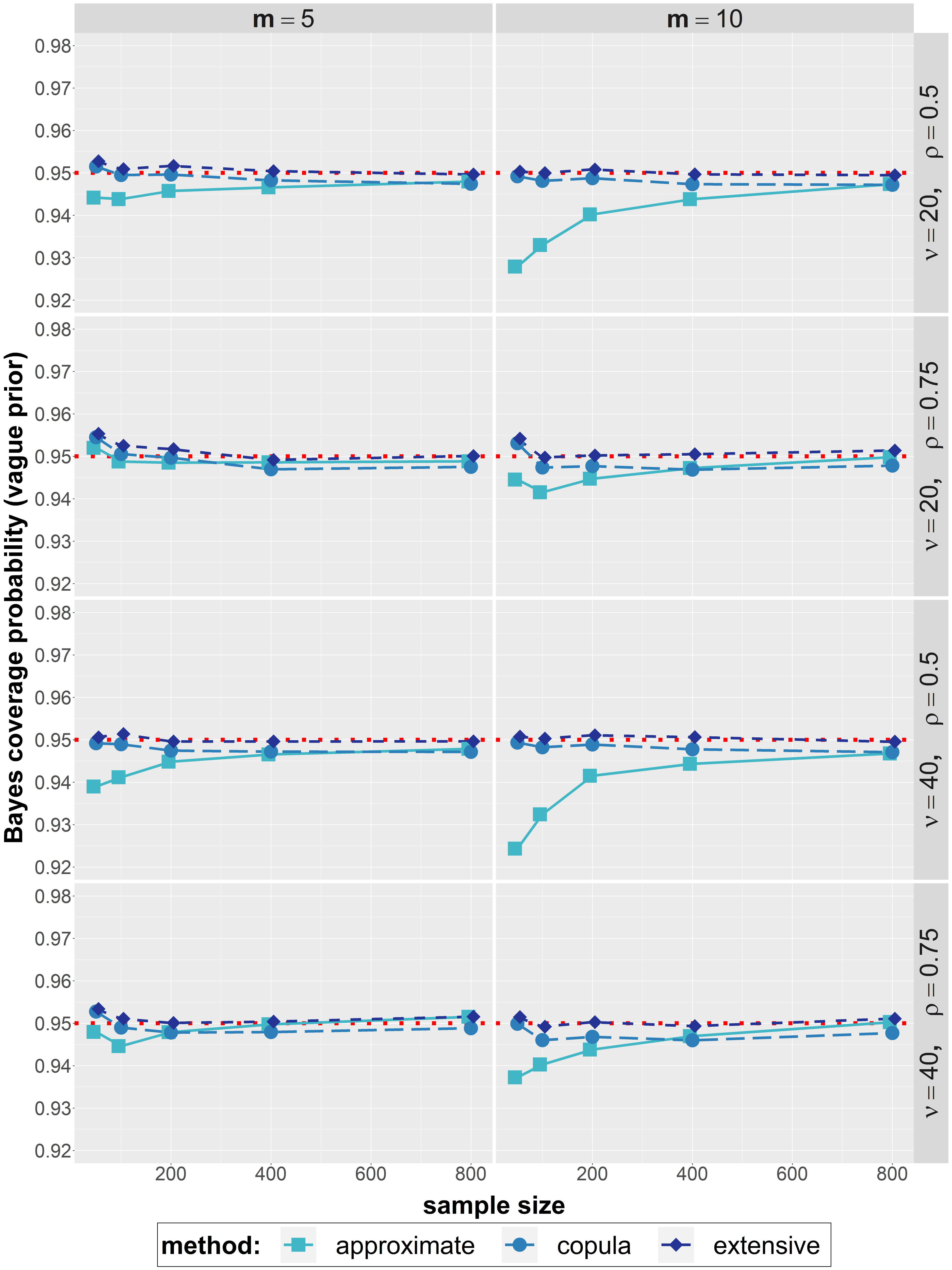}
	}
	\caption{Simulated Bayes coverage probability of different credible regions for raw proportions $\vartheta_j,\ j=1,\ldots,m$. Results are stratified by generative prior distribution, see section \ref{setup}. Only results for the vague analysis prior are shown. Each point is based on $50,000$ simulations.  
	}
	\label{figureA2v}
\end{figure*}

\clearpage

\subsection{\textbf{Analysis of differences of proportions}}

\begin{figure*}[h!]
	\centerline{%
		\includegraphics[width=0.85\linewidth]{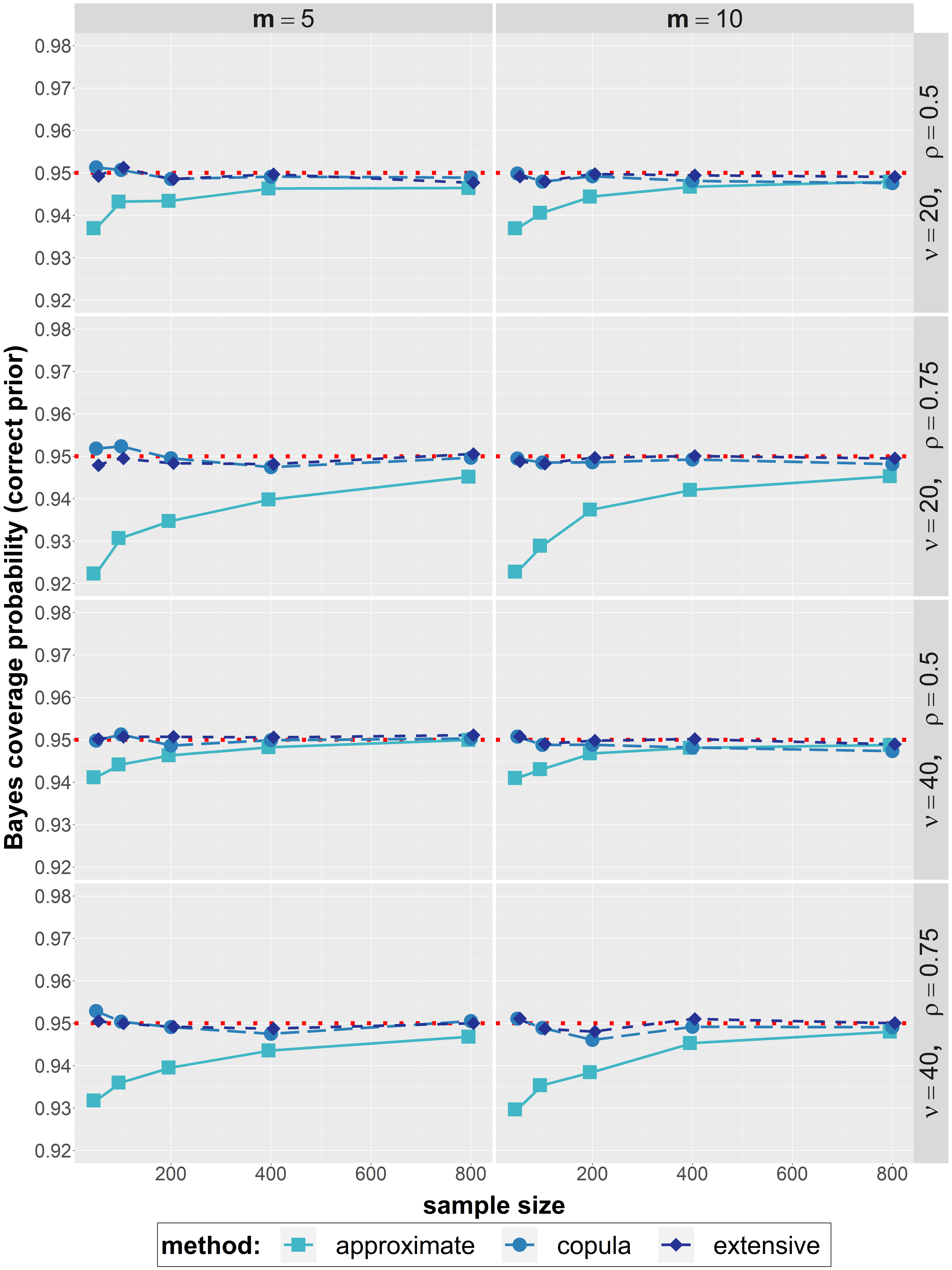}
	}
	\caption{Simulated Bayes coverage probability of different credible regions for differences of proportions $\vartheta_j - \vartheta_m$, $j=1,\ldots,m-1$. Results are stratified by generative prior distribution, see section \ref{setup}. Only results for the correct analysis prior are shown. Each point is based on $50,000$ simulations. 
	}
	\label{figureA3c}
\end{figure*}

\begin{figure*}[h!]
	\centerline{%
		\includegraphics[width=0.85\linewidth]{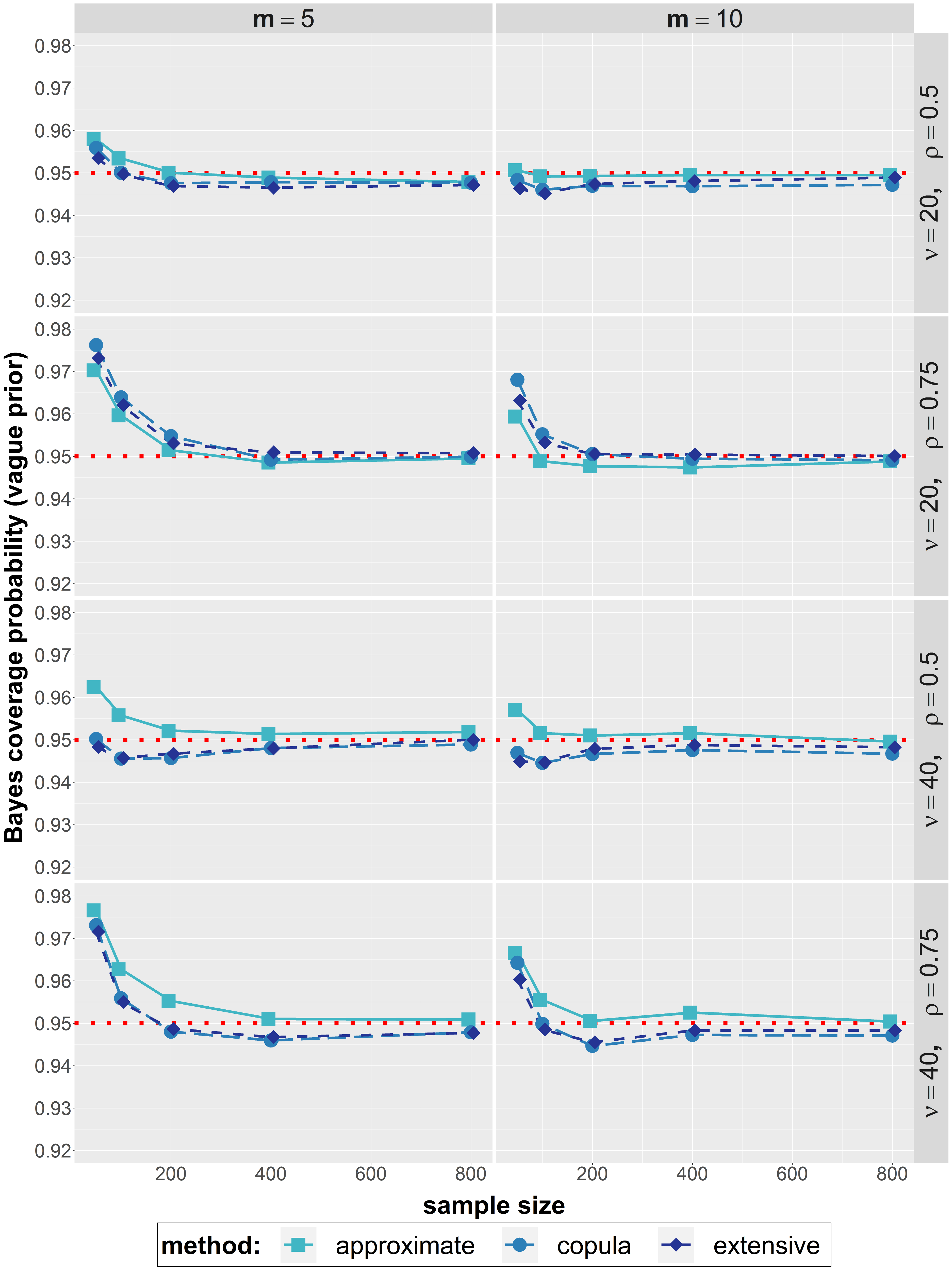}
	}
	\caption{Simulated Bayes coverage probability of different credible regions for differences of proportions $\vartheta_j - \vartheta_m$, $j=1,\ldots,m-1$. Results are stratified by generative prior distribution, see section \ref{setup}. Only results for the vague analysis prior are shown. Each point is based on $50,000$ simulations. 
	}
	\label{figureA3v}
\end{figure*}

\clearpage

\end{document}